\documentclass[aps,preprint,prl,groupedaddress]{revtex4-2}
\usepackage{amsfonts, amsmath, amsthm, amssymb}
\usepackage{caption}
\usepackage{hyperref}
\usepackage{graphicx}
\usepackage{setspace}

\newtheorem{theorem}{Theorem}[]

\newtheorem{corollary}[theorem]{Corollary}

\theoremstyle{definition}
\numberwithin{equation}{section}

\singlespace

\begin{document}

\title{Graph topology determines coexistence in the rock-paper-scissors system}

\author{Mark Lowell}
\email{MarkLowell@theorem-engine.org}
\affiliation{Theorem Engine, PO Box 22479, 368 S Pickett St., Alexandria, VA 
22304}
\homepage{https://www.theorem-engine.org}

\date{\today}

\begin{abstract}

Species survival in the $(3, 1)$ May-Leonard system is determined by the 
mobility, with a critical mobility threshold between long-term coexistence and 
extinction. We show experimentally that the critical mobility threshold is 
determined by the topology of the graph, with the critical mobility threshold 
of the periodic lattice twice that of a topological 2-sphere. We use
topological techniques to monitor the evolution of patterns in various graphs
and estimate their mean persistence time, and show that the difference in
critical mobility threshold is due to a specific pattern that can form on the
lattice but not on the sphere. Finally, we release the software toolkit we
developed to perform these experiments to the community.

\end{abstract}

\pacs{87.23.Cc, 02.40.Re, 89.75.Fb}

\maketitle

\section{Introduction}

Over the last thirty years, theoretical ecologists have explored the role of 
space in biodiversity using stochastic spatial models \cite{usersguide}.
A stochastic spatial model consists of a graph, usually a two-dimensional
square lattice, where each vertex has an associated state selected from a
finite set, with each state representing the vertex being occupied by an
individual of a different species. The vertex states evolve stochastically as a
continuous-time Markov chain through interactions between neighbors.

The $(3, 1)$ May-Leonard system, also called the rock-paper-scissors system, is 
one of the most extensively studied of these stochastic spatial models 
\cite{Reichenbach}\cite{May-Leonard}\cite{nr}. Each vertex in the $(3, 1)$ May-
Leonard system can have one of four possible states, where state 0 represents 
an empty vertex, while states 1, 2, and 3 represent different species. The 
system evolves through diffusion, reproduction, and predation reactions, where 
species 1 predates on species 2 which predates on species 3 which predates on 
species 1. This non-transitive competition is one possible explanation for the
paradox of the plankton \cite{plankton} - the mystery of why there are so many
more species in the world than simple mathematical models suggest there should
be - and has been observed in many real-world ecosystems, including \emph{E.
Coli} bacteria \cite{ecoli2}, Californian lizards \cite{lizards}, and coral
reef invertebrates \cite{coral}. On a fully-connected graph, two of the three
species in the $(3, 1)$ May-Leonard system quickly die out. However, on a
lattice, spiral-shaped domains of a single species form, as shown in
FIG.~\ref{fig:example}. Each pixel in FIG.~\ref{fig:example} is a vertex on a
periodic lattice after the system has run long enough for the pattern to
stabilize, with the color denoting the species occupying that vertex. Species 1
chases species 2 across the lattice, species 2 chases species 3, and species 3
chases species 1, with rotating spirals forming where the three domains meet.
These domains can form a variety of distinct patterns on the graph, with each
picture in FIG.~\ref{fig:example} showing an example of a different pattern.

The mobility of a $(3, 1)$ May-Leonard system is defined in \cite{Reichenbach} 
to be the diffusion rate normalized by the system size. The system has a 
critical mobility threshold: if the mobility is below the threshold, the model 
enters a super-persistent transient state and coexistence may be preserved for 
arbitrarily long time. As the mobility increases, the domains grow wider. Above
 the critical threshold, the domains outgrow the lattice and two species die 
out.

However, \cite{Reichenbach} only studied the critical mobility threshold on the 
periodic square lattice. Other work has examined random graphs, and lattices 
with different local neighborhoods \cite{other-graph-1} \cite{other-graph-2} 
\cite{other-graph-3} \cite{other-graph-4}. Aside from random graphs, we are 
unaware of past work that has examined the role of the \emph{global} graph 
structure on the critical mobility threshold. A non-periodic lattice was 
examined in \cite{nonperiodic}, and in fact showed a slightly different 
critical mobility threshold than \cite{Reichenbach}, but they did not connect
this difference to the graph structure.

In this work, we examine the effect of global graph topology on the critical 
mobility threshold, and show that different graphs have different thresholds. 
Each graph we study is \emph{locally} equivalent, but globally distinct. In 
addition, we use topological data analysis to track the patterns that form on 
each graph, and show that these differences in critical mobility threshold are 
caused by topological constraints on which patterns can form on which graph.

Finally, with a few exceptions, e.g. \cite{code}, previous researchers of 
stochastic spatial models have generally not made the code they used publicly 
available. We describe a new software tool for stochastic spatial models, named
 Viridicle, which we used to perform the experiments in this paper, and which 
we release to the community as open source 
\footnote{\url{https://github.com/TheoremEngine/Viridicle}}. We additionally 
make our experimental scripts 
\footnote{\url{https://github.com/TheoremEngine/graph-topology-determines-
survival}} and experimental data files \footnote{\url{https://www.theorem-
engine.com/papers/001/index.html}} public.

In Methodology, we describe Viridicle and the analytic methods we use. In 
Experiments, we use these techniques to explore pattern formation and survival 
in the $(3, 1)$ May-Leonard system on the periodic lattice, on the 
discretization of the topological 2-sphere, and on the discretization of the 
Klein bottle. We show experimentally that certain patterns can form only on 
certain graphs, that each pattern has a distinct median survival duration, and 
that these result in different critical mobility thresholds. In Discussion, we 
follow up these experiments by providing a proof that a particular long-lived 
pattern, which we call the marching bands, cannot form on the sphere, and that 
this explains its lower critical mobility threshold. In Conclusion, we conclude
 by summing up our results and discussing directions for future research.

\section{Methodology}
\label{sec:methodology}

\subsection{May-Leonard Models}

A stochastic spatial model is a graph where, at time $t$, each vertex $v$ has an
 associated state $s(v, t)$ taken from a finite set $\mathcal{S} = \{0, 1, 2, 
..., n\}$, and the vertex states evolve over time through pair reactions. 
(Since the units of time are arbitrary in this model, we treat time as 
unitless.) We assume without loss of generality that our graph is directed, 
replacing the edges in a non-directed graph with pairs of directed edges, one 
in each direction. Given a directed edge $e = (v_1, v_2)$, where $s(v_1, t) = 
s_1$ and $s(v_2, t) = s_2$, the time until vertices $v_1, v_2$ change state to 
$s'_1, s'_2$, conditioned on no other event occuring first, is exponentially 
distributed with rate $\rho(s_1, s_2, s'_1, s'_2)$, where $\rho$ is a rate 
tensor \footnote{There is some disagreement in the literature on the best 
approach to defining rates. It is common to define the rate as the rate at 
which a transition occurs on a \emph{vertex}, e.g., \cite{Reichenbach}. We 
choose to define the rate as the rate at which an event occurs on a directed 
edge because this allows significant computational savings: we can sample the 
next vertex pair by sampling from a list of directed edges, which requires a 
single RNG call, rather than by first sampling a vertex and then sampling a 
neighbor of that vertex, which requires two. The two approachs are equivalent 
up to a rescaling of time on a regular graph, and all graphs we explore are 
regular.}. $\rho$ is typically very sparse, and the possible reactions are often
 described using notation from chemical reactions:
\[
	(s_1, s_2) \to (s_1', s_2')\mbox{ at rate }\rho(s_1, s_2, s'_1, s'_2)
\]
Stochastic spatial models are implemented using the Gillespie algorithm 
\cite{gillespie}. Given a finite collection of possible events $E_i$, where the
 time to each event is exponentially distributed with rate $\rho_i$, the time 
until \emph{some} event occurs is exponentially distributed with rate $\rho_1 +
 \rho_2 + ...$, and the probability that the next event is $E_i$ is $\rho_i / 
(\rho_1 + ...)$. We implement this in a stochastic spatial model by first 
randomly sampling an edge, then randomly sampling a reaction on that edge. If 
the randomly selected edge's vertices $v_1, v_2$ are currently in states $s_1, 
s_2$, then the probability they transition to states $s_1', s_2'$ is:
\[
	\mathbb{P}((s_1, s_2) \to (s_1', s_2')) = \frac{\rho(s_1, s_2, s_1', 
s_2')}{\rho_{\mbox{max}}}
\]\[
	\rho_{\mbox{max}} = \max_{s_1, s_2} \left(\sum_{s_1', s_2'} \rho(s_1, s_2, 
s_1', s_2')\right)
\]
Each Gillespie algorithm step accounts for $\rho_{\mbox{max}}N^{-1}$ time, where
 $N$ is the number of vertices in the graph.\footnote{Some past work, e.g. 
\cite{Reichenbach}, instead defines one step as taking $N^{-1}$ time; the two 
approaches are equivalent up to rescaling of time.}

We follow \cite{nr} in defining a \textbf{$(3, 1)$ May-Leonard model} as a 
stochastic spatial model with 4 states. State $0$ denotes an empty vertex, 
while $1, 2, 3$ all denote vertices occupied by a single individual of that 
species. The model evolves by three reactions, reproduction, diffusion, and 
predation:
\[
	(s, 0)\to (s, s)\mbox{ at rate }0.2\mbox{ for }s > 0
\]\[
	(s_1, s_2)\to (s_2, s_1)\mbox{ at rate }\mu
\]\[
	(s_1, s_2)\to (s_1, 0)\mbox{ at rate }0.2
	\mbox{ if }s_1 + 1\equiv s_2 (\mbox{mod } 3)\mbox{ and }s_1, s_2 > 0
\]
Where the diffusion rate $\mu$ is a hyperparameter. Following 
\cite{Reichenbach3}, we define the \textbf{mobility} to be $m = \mu(2N)^{-1}$, 
where $N$ is the number of vertices. The mobility is derived from a scaling 
term in the asymptotic equivalence of the stochastic spatial model to a PDE,
under the assumption our graph consists locally of 2-dimensional von Neumann 
neighborhoods. Due to differences in how we define the rates and time, our 
mobilities are not directly comparable to the mobilities in 
\cite{Reichenbach3}, but our results are equivalent up to rescaling.

\subsection{Viridicle}

To perform our experiments, we have written and make publicly available a 
software library for simulation of stochastic spatial models, Viridicle. 
Viridicle is a Python libary written in C, using the NumPy library \cite{numpy}
 to allow the user to conveniently pass data to and from the underlying C 
layer.

Viridicle provides a fast implementation of arbitrary stochastic spatial models.
 It uses a lookup table of pointers to sites to represent the graph edges. This
 allows the code to support arbitrary, non-lattice graphs without changes to 
the core functionality. The user can specify a graph structure using the 
NetworkX libary \cite{networkx}. We include in the software example 
implementations of the $(N, r)$ May-Leonard model \cite{Reichenbach} \cite{nr},
 the Durrett \& Levin model of \emph{E. Coli} bacteria \cite{ecoli}, and the 
Avelino et al model of $Z_N$ Lotka-Volterra competition \cite{strings}. We show
 example results for each in FIG.~\ref{fig:examples}; each picture shows a 
sample graph from one of these models after the model has run long enough for 
patterns to stabilize. Each pixel represents a vertex in the graph, and 
different colors represent different model states.

\subsection{Pattern Identification}
\label{subsec:TDA}

As shown in FIG.~\ref{fig:example}, the May-Leonard system can form multiple 
distinct patterns on the lattice. To track the evolution of the pattern over 
the course of the experiment, we treat the graph as a planar graph and 
calculate the Betti numbers of the planar subgraphs consisting of all vertices 
of a specific state. The Betti numbers are invariants of a topological space,
equal to the rank of the simplicial homology groups; see \cite{Hatcher} for an
introduction.

We first clean the graphs to eliminate small imperfections, such as solitary 
empty vertices carried into the interior of a domain by diffusion. We define a 
\textbf{cluster} to be a maximal connected subgraph whose vertices are all the 
same state. We identify all clusters falling below $N / 256$ vertices, where 
$N$ is the number of vertices in the graph. We set the vertices of those 
clusters to a null state, distinct from the states 0 through 3 in the model. We
 then take every vertex in the null state that is adjacent to a vertex of 
another state, and which is adjacent only to vertices of that state or of null 
state, and set that vertex to the neighboring vertex's state. We repeat this 
process $\sqrt{N} / 8$ times. An example of this procedure is shown in 
FIG.~\ref{fig:cleaning}. 

We then calculate the 0th and 1st Betti numbers of the subgraphs of all
vertices belonging to each state. In a closed 2-dimensional manifold, $b_0$ is
equal to the number of clusters, and $b_1$ can be calculated by Euler's formula:
\[
	b_1 = b_0 - \chi + b_2
\]\[
	\chi = (\#\mbox{ of vertices}) + (\#\mbox{ of edges}) - (\#\mbox{ of faces})
\]
The second Betti number, $b_2$, is 1 if the subgraph contains the entire graph 
and the graph is the discretization of an orientable manifold; otherwise it is 
0. We define $b_i(s)$ to be the $i$th Betti number of the subgraph of all 
vertices, edges, and faces of state $s$.

During each experiment, we calculated the Betti numbers of each state every 1.0 
time. We sort the Betti numbers for states 1, 2, and 3 and concatenate them so 
that the resulting \textbf{pattern codes} are invariant under permutation of 
the non-empty states. For example, FIG.~\ref{fig:cleaning} is in state 111011: 
the 0th Betti number of all three states are 1, the 1st Betti number of states 
1 and 3 are 1, and the 1st Betti number of state 2 is 0.

Finally, in order to eliminate transitory fluctuations that represent flaws in 
our analytic procedure rather than genuine changes in the system, we 
reclassified any pattern that lasted for less than 5.0 time as ``transitory''. 
If the system entered a transitory state for 5.0 or less time and then returned
 to its original state, we removed that transitory fluctuation, reclassifying 
it as the surrounding pattern.

\section{Experiments}
\label{sec:experiments}

We explored the effect of graph topology on the critical mobility threshold of 
the $(3, 1)$ May-Leonard system using three graphs, as shown in 
FIG.~\ref{fig:topology}: a 2-dimensional periodic lattice, equivalent to a 
discretization of a torus; a 2-dimensional lattice with edges wrapped to form a
 discretization of a topological 2-sphere; and a 2-dimensional lattice with the
 edges wrapped to form a Klein bottle. These graphs are all locally equivalent,
 but have distinct global topologies.

We began by exploring feasible patterns that could occur on each graph. We 
experimented with random initializations, as used in \cite{Reichenbach} and 
\cite{nonperiodic}, but we found that we could achieve coexistence at higher 
mobility using a structured initialization. We initialized the lattice in state
 0, then placed one block of each state in random non-overlapping locations on 
it, as shown in FIG.~\ref{fig:initializations}. We used four choices of lattice
 width, 32, 64, 128, and 256 vertices, and sampled a variety of mobilities 
between $1\times10^{-4}$ and $8\times10^{-4}$. We ran 256 experiments on AWS 
EC2 for each combination of hyperparameters, running each experiment for 1000.0
 time.

We calculated the pattern codes of each experiment every 1.0 time, giving us the
 0th and 1st Betti numbers of the planar subgraphs consisting of all vertices 
of a particular state, sorted so as to be invariant to permutations of the non-
empty state. We ignored patterns before 200 time, as these might be transients 
during system warmup. We also ignored patterns with a zero 0th Betti number for
 a non-empty state, as these are patterns where no species has gone extinct 
yet, but where some species has become so rarefied that it is eliminated by the
 cleaning process. Recovery from this state is rare, and does not significantly
 contribute to survival time. We looked for patterns for which we had at least 
one example that persisted for more than 100.0 time when $m \geq 
2.5\times10^{-4}$. This gave us three patterns on the Klein bottle, two on the 
sphere, and three on the torus. Examples of each pattern are shown in 
FIG.~\ref{fig:patterns}.

We saved the model state at the start of periods where the system remains in the
 same pattern for at least 100.0 time. We randomly sampled from these saved 
states as initializations and ran 256 additional experiments with new random 
seeds for each combination of graph, pattern, and width greater than 32, and a 
selection of mobilities. We excluded width 32 as the models were so unstable we
 could not gather enough initializations. For mobilities with insufficient 
examples, we sampled additional saved states from neighboring mobilities to 
ensure we had at least 64 initializations. We refer to these as patterned 
initializations, since they are intended to induce the formation of a specific 
pattern at the start of the experiment. We ran the models to see how long each 
pattern would last before transitioning to a new pattern, discarding 
experiments where, despite the initialization, the desired pattern did not 
form.

\section{Results and Discussion}
\label{sec:discussion}

In FIG.~\ref{fig:extinction}, we show the extinction curves of our experiments 
with block initialization. Each curve gives the probability that, in a system 
beginning with block initialization, at least one species will have gone 
extinct by the end of the experiment at 1000.0 time. We observe that the 
critical mobility threshold of the sphere is about $2.8\times10^{-4}$ while the
 critical mobility threshold of the torus and the Klein bottle are both about 
$5.5\times10^{-4}$, implying that the large-scale graph structure does have a 
significant effect on the critical mobility threshold.

In FIG.~\ref{fig:median-durations}, we show the median duration of each long-
lived pattern from our experiments with patterned initializon, estimated using 
the Kaplan-Meier estimator \cite{kaplan-meier} \cite{lifelines}. We observe 
that each pattern has a distinct median survival time, and that the survival of
 the torus and Klein bottle graphs at higher mobilities is due to a specific 
pattern, 111111. We dub this pattern the \textbf{marching bands}, since it 
consists of three bands of different states ``marching'' across the lattice, 
state 1 chasing state 2 chasing state 3 chasing state 1. The marching bands 
pattern is extremely stable, likely due to the absence of spirals, since there 
is no point on the marching bands where all three states meet.

It is easy to show that topological constraints make it impossible for the 
marching bands to form on the sphere. In what follows, $H_k(X)$ is the $k$th 
simplicial homology group of the manifold $X$.

\begin{theorem}\label{nobands}Let $U_1, U_2, U_3 \subset S^2$ be connected, non-
empty, open sets such that $\overline{U_1}\cup\overline{U_2}\cup\overline{U_3} 
= S^2$ and $U_i \cap U_j$ is empty for $i\neq j$. There exists no choice of 
$U_i$ such that $H_1(U_1) \cong H_1(U_2) \cong \mathbb{Z}$.\end{theorem}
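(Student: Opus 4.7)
The plan is to use Alexander duality for open subsets of $S^2$. For a (sufficiently tame) connected open $U \subset S^2$, Alexander duality gives $\tilde{H}_1(U) \cong \tilde{H}^0(S^2 \setminus U)$, so $H_1(U) \cong \mathbb{Z}$ exactly when $S^2 \setminus U$ has exactly two connected components. Applied to $U_1$ and $U_2$, this converts the homological hypothesis into a separation statement: each of $U_1, U_2$ splits $S^2$ into two closed pieces, and the theorem becomes a question of how a third disjoint connected open set $U_3$ can fit into the resulting picture.

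Write $S^2 \setminus U_1 = C_+ \sqcup C_-$, the two closed components. Because $U_2$ and $U_3$ are connected and disjoint from $U_1$, each sits inside exactly one of $C_+, C_-$. Up to relabeling, this yields two cases.

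First, if $U_2, U_3 \subset C_+$, then $\overline{U_2} \cup \overline{U_3} \subset C_+$ is disjoint from $C_-$, so the covering hypothesis forces $C_- \subset \overline{U_1}$, hence $C_- \subset \partial U_1$; that is, $C_-$ is a component of $S^2 \setminus U_1$ with empty interior. Second, if $U_2 \subset C_+$ and $U_3 \subset C_-$, apply the same move to $U_2$: write $S^2 \setminus U_2 = D_+ \sqcup D_-$. The set $U_1 \cup C_-$ is connected (they meet along a component of $\partial U_1$) and disjoint from $U_2$, so it lies in one of $D_\pm$, say $D_-$, which therefore also contains $\overline{U_3}$. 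The remaining component $D_+$ then meets neither $\overline{U_1}$ nor $\overline{U_3}$, so the covering hypothesis gives $D_+ \subset \overline{U_2} \setminus U_2 = \partial U_2$: again an empty-interior component. In either case we reach a contradiction, so no such configuration exists.

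The main obstacle is the regularity step. The ``empty-interior component'' conclusion is only a contradiction once one knows each component of $S^2 \setminus U_i$ has non-empty interior; pathological examples (for instance $U_2$ equal to an open disk minus an isolated puncture) produce a single-point complement component that evades the argument above. I would handle this by restricting to $U_i = \mathrm{int}(\overline{U_i})$, which is the natural regularity for the discretized patterns considered in the paper and rules out such pathologies, so that both cases produce the required contradiction.
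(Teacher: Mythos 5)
Your argument is correct for regular open sets and takes a genuinely different route from the paper's. The paper works forward from the classification of open subsurfaces of $S^2$: $H_1(U_1)\cong\mathbb{Z}$ forces $U_1$ to be an open annulus with $\partial U_1$ a disjoint pair of circles, and a Mayer--Vietoris computation for $S^2=\widetilde{U}_1\cup(\widetilde{U}_2\cup\widetilde{U}_3)$ (with $\widetilde{U}_i$ slight thickenings) then yields $H_1(U_2)\cong H_1(U_3)\cong 0$ directly. You instead dualize the hypothesis: Alexander duality turns ``$H_1(U_i)\cong\mathbb{Z}$'' into ``$S^2\setminus U_i$ has exactly two components,'' after which everything is point-set separation. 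Your route avoids both the surface classification and the homological bookkeeping, at the cost of a two-case analysis; the paper's route computes the homology of $\widetilde{U}_2\cup\widetilde{U}_3$ in one stroke, but leans on the unstated claim that the frontier of $U_1$ in $S^2$ is a pair of circles.

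That last point is why the regularity issue you raise is not merely a loose end in your own write-up: the theorem is false as literally stated, and your punctured-disk pathology is exactly the counterexample. Take $D,E$ to be the open northern and southern hemispheres, $p$ the north pole, $B$ a small open disk about the south pole with $\overline{B}\subset E$, and set $U_1=D\setminus\{p\}$, $U_2=E\setminus\overline{B}$, $U_3=B$. These are connected, open, pairwise disjoint, their closures cover $S^2$, and $H_1(U_1)\cong H_1(U_2)\cong\mathbb{Z}$. The paper's proof breaks at ``$\partial U_1$ must be a disjoint pair of circles'' (here it is a circle plus an isolated point); your proof breaks at the empty-interior step, exactly as you predicted. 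So some tameness hypothesis --- your $U_i=\mathrm{int}(\overline{U_i})$, or ``each $U_i$ is a subcomplex neighborhood,'' which is what the corollary about discretized patterns actually needs --- must be added to the statement, and once it is, your proof closes cleanly: a component of $S^2\setminus U_i$ contained in $\partial U_i$ is clopen in $S^2\setminus U_i$, hence equals $W\cap(S^2\setminus U_i)$ for some open $W$ disjoint from the other component, forcing $W\subset U_i\cup\partial U_i=\overline{U_i}$ and thus $W\subset\mathrm{int}(\overline{U_i})=U_i$, i.e.\ the component is empty.
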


	\begin{proof}Since $U_i$ are connected surfaces that are subsets of $S^2$, they
 must be orientable, possibly punctured, genus-0 surfaces. Therefore, they are 
classified by their first homology group, so $U_1$ must be a neighborhood of a 
circle, and $\partial U_1$ must be a disjoint pair of circles.

Let $\widetilde{U}_i$ be arbitrarily small neighborhoods of $U_i$, and let 
$V=\widetilde{U}_2\cup\widetilde{U}_3$. Then $V\cap\widetilde{U}_1$ is an 
arbitrarily small neighborhood of $\partial U_1$, and must have homology groups
 $H_0(V\cap\widetilde{U}_1)\cong H_1(V\cap\widetilde{U}_1) \cong \mathbb{Z}^2, 
H_2(V\cap\widetilde{U}_1)\cong \varnothing$.
		
By the Mayer-Veitoris Theorem, there is a pair of long exact sequences:
\[
	\varnothing \to H_2(S^2) \to H_1(V\cap\widetilde{U}_1) \to 
\]\[
	H_1(V) \oplus H_1(\widetilde{U}_1) \to H_1(S^2) \to ...
\]\[
	...\to H_1(S^2) \to H_0(V\cap\widetilde{U}_1) \to 
\]\[
	H_0(V) \oplus H_0(\widetilde{U}_1) \to H_0(S^2) \to \varnothing
\]
Which is equivalent to:
\[
	\varnothing \to \mathbb{Z} \to \mathbb{Z}^2\to H_1(V)\oplus\mathbb{Z} \to 
\varnothing
\]\[
	\varnothing \to \mathbb{Z}^2 \to H_0(V) \oplus \mathbb{Z} \to \mathbb{Z} \to 
\varnothing
\]
From which we conclude that $H_1(V) \cong \varnothing, H_0(V)\cong 
\mathbb{Z}^2$. Therefore, $V$ has two components, both of which are orientable 
open surfaces with genus zero and no punctures. Since $\widetilde{U}_2, 
\widetilde{U}_3$ are both connected, each must correspond to one of those two 
components, and so $H_1(U_2)\cong H_1(U_3)\cong \varnothing$.

\end{proof}

\begin{corollary}The marching bands cannot form on a spherical 
graph.\end{corollary}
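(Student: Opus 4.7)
The plan is to reduce the corollary to Theorem~\ref{nobands} via a topological realization argument. I would begin by assuming, for contradiction, that a marching bands configuration --- pattern code $111111$ --- has formed on a spherical graph $G$ whose geometric realization is a CW decomposition of $S^2$. For each species $s\in\{1,2,3\}$, let $C_s\subset S^2$ be the closed subcomplex spanned by all vertices (with their associated edges and faces) in state $s$, and let $U_s$ be a sufficiently small open regular neighborhood of $C_s$ in $S^2$.

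Next I would verify that the triple $(U_1,U_2,U_3)$ meets the hypotheses of Theorem~\ref{nobands}. Each $U_s$ deformation retracts onto $C_s$, so its singular homology agrees with the simplicial homology that the Betti numbers in the pattern code measure: $b_0(C_s)=b_1(C_s)=1$, hence each $U_s$ is connected and satisfies $H_1(U_s)\cong\mathbb{Z}$. By choosing the neighborhoods small enough, and by reassigning the stray ``null'' vertices produced by the cleaning procedure of Section~\ref{subsec:TDA} to an adjacent species, the $U_s$ can be made pairwise disjoint, while $\overline{U_1}\cup\overline{U_2}\cup\overline{U_3}=S^2$ since every cell of $S^2$ lies in the closed star of some $C_s$.

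With the hypotheses established, Theorem~\ref{nobands} immediately yields a contradiction, as it forbids even two of the three sets from having $H_1\cong\mathbb{Z}$. The main obstacle I expect is the realization step: making precise the passage from the combinatorial pattern code, defined through planar subgraph Betti numbers computed via Euler's formula, to genuine open subsets of $S^2$ with matching homology and satisfying the disjointness and covering conditions. Once the underlying graph is known to be an honest CW decomposition of $S^2$, this should follow from standard regular-neighborhood arguments, so the proof is conceptually short but worth writing out carefully.
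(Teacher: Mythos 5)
Your proposal takes the same route as the paper, which simply declares the corollary ``immediate from Theorem~\ref{nobands}''; you are just filling in the realization step that the paper leaves implicit, and your reading of the pattern code $111111$ as giving three connected, pairwise disjoint regions each with $H_1\cong\mathbb{Z}$ is exactly the intended reduction. The only detail to adjust when writing it out: uniformly ``sufficiently small'' regular neighborhoods of the $C_s$ are pairwise disjoint but their closures do not cover $S^2$, so you should instead take the open stars of the $C_s$ in a barycentric subdivision (disjoint, deformation retracting to $C_s$, with closed stars covering the sphere once empty and null cells have been absorbed into adjacent species).
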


\begin{proof} This is immediate from Theorem~\ref{nobands}.\end{proof}

However, the existence of the marching bands is not the sole distinction between
 the graphs. We find that if we eliminate the marching bands, the spherical 
graph actually has a slightly higher critical mobility. In 
FIG.~\ref{fig:extinction-without-marching-bands}, we plot the probability that 
at least one species goes extinct by 1000.0 time for our experiments using 
block initialization, excluding experiments that entered the marching bands 
pattern at any time. We observe that, excluding experiments that entered the 
marching bands, the critical mobility threshold of the periodic lattice and the
 Klein bottle are both $2.5\times10^{-4}$, compared to $2.8\times10^{-4}$. This
 implies that the effect of graph topology is not limited to the ability of the
 torus and Klein bottle to support patterns without spirals, because even in 
spiraling patterns, the critical mobility threshold of the sphere is slightly 
different.

Interestingly, the non-orientability of the Klein bottle appears to have no 
impact on the critical mobility threshold. The Klein bottle has higher 
probability of extinction for a given mobility, but the critical mobility 
thresholds are identical. We attribute this to the fact that the marching bands
 can only form in one direction on the Klein bottle, as opposed to two on the 
torus, making it less likely this pattern will form during model warmup.

\section{Conclusions and Further Research}
\label{sec:conclusion}

In this work we have shown that the critical mobility threshold of the $(3, 1)$ 
May-Leonard system is a property of the global graph structure, due to 
topological constraints on which patterns can form on which graphs.

Future research will focus on a more detailed understanding of the role topology
 plays in what patterns can form and how they evolve. The Betti number 
calculation we used is tractable only for discretizations of 2-dimensional 
manifolds; finding a more general homological calculation would allow us to 
extend this analysis to other graphs.


%

\section{Figures}

\begin{figure}
	\begin{center}
	\includegraphics[width=4.7cm]{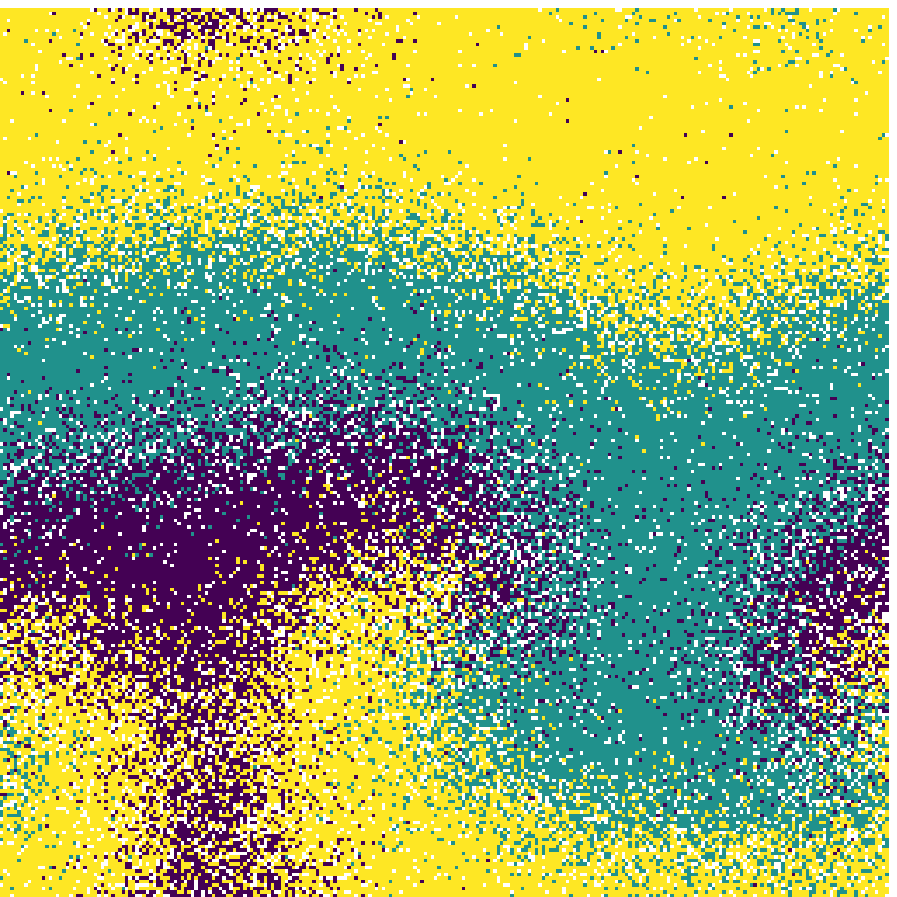}
	\includegraphics[width=4.7cm]{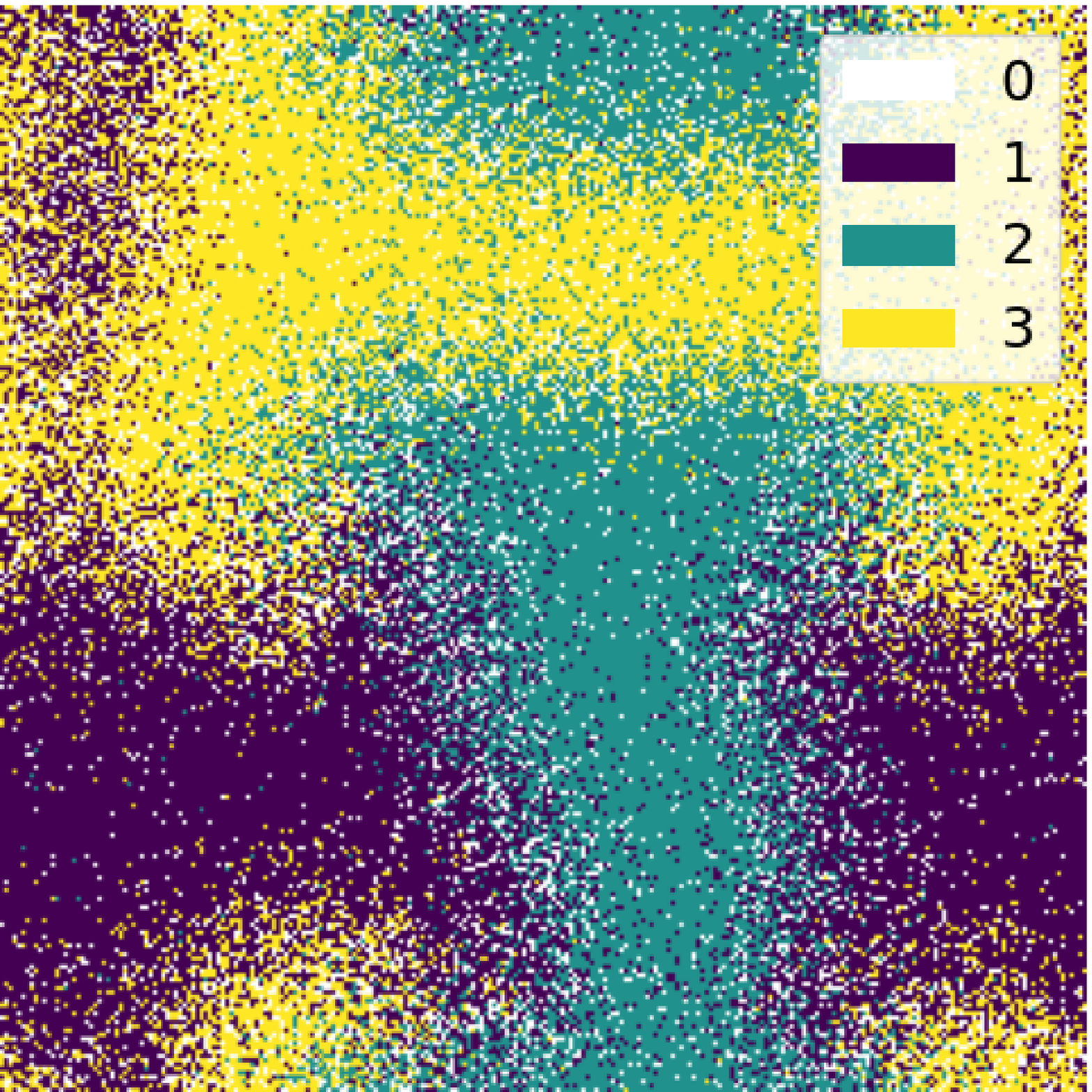}
	\end{center}
	\caption{Example model snapshots from the $(3, 1)$ May-Leonard system 
demonstrating different patterns
		\label{fig:example}
	}
\end{figure}

\begin{figure}
	\begin{center}
	\includegraphics[width=4.7cm]{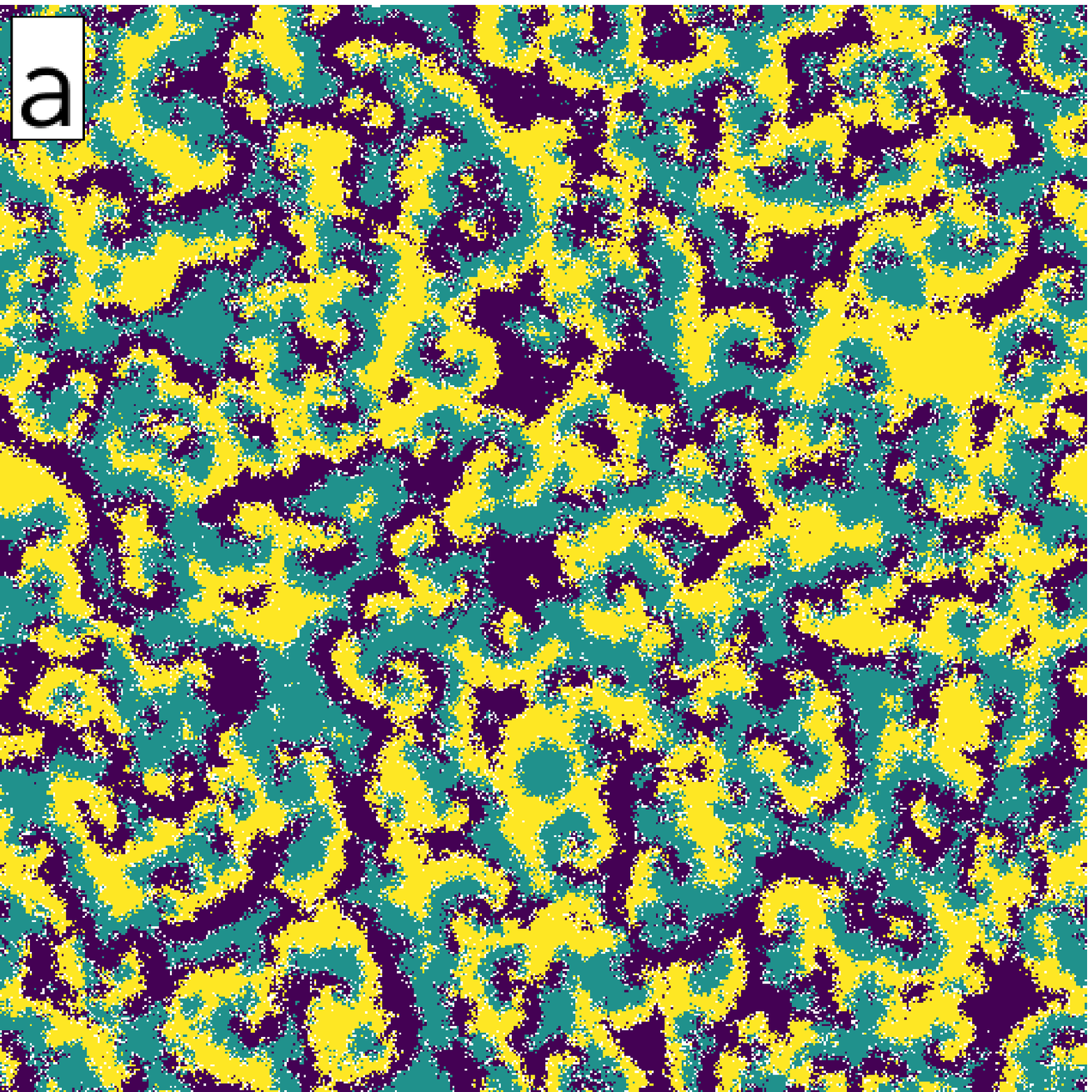}
	\includegraphics[width=4.7cm]{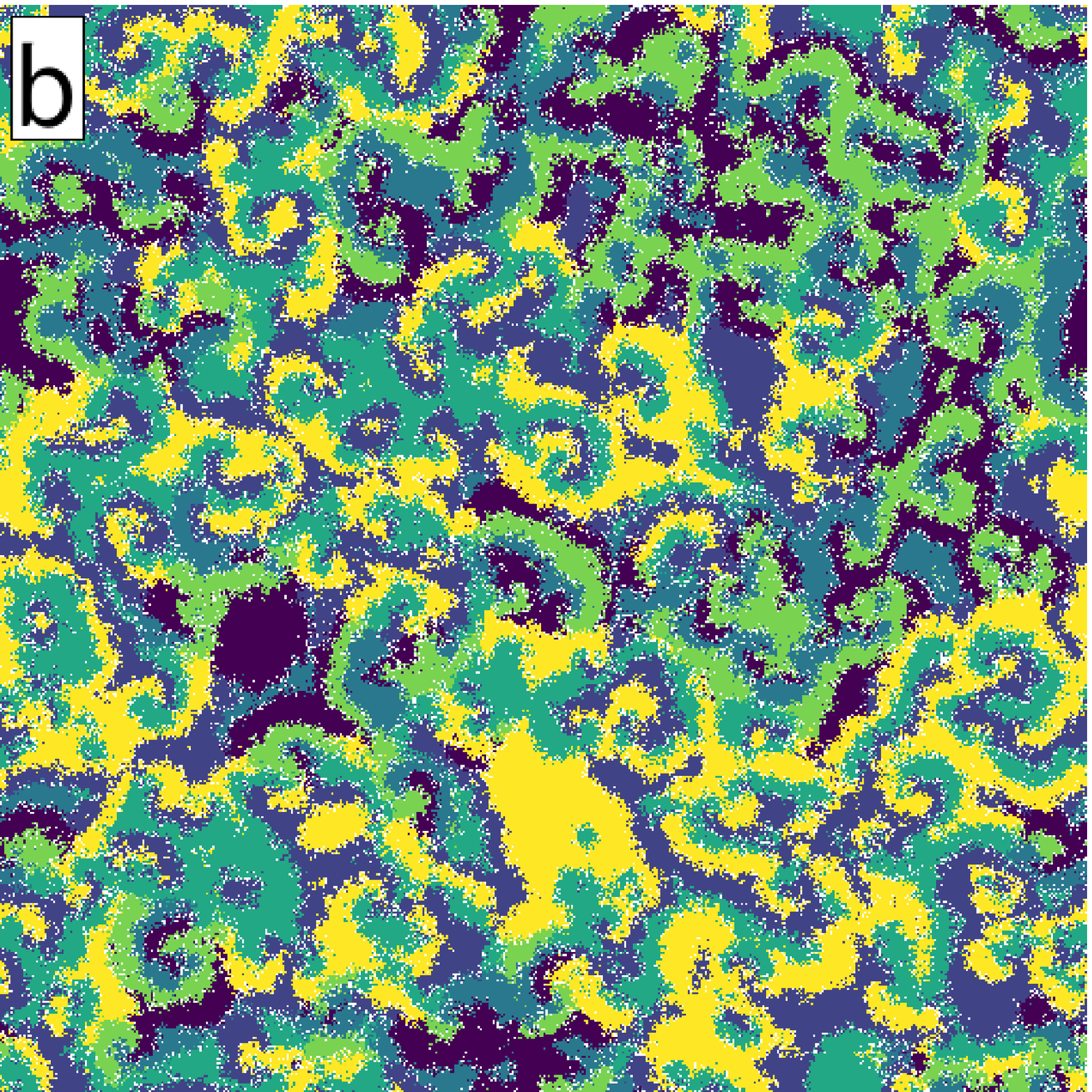}
	\\
	\includegraphics[width=4.7cm]{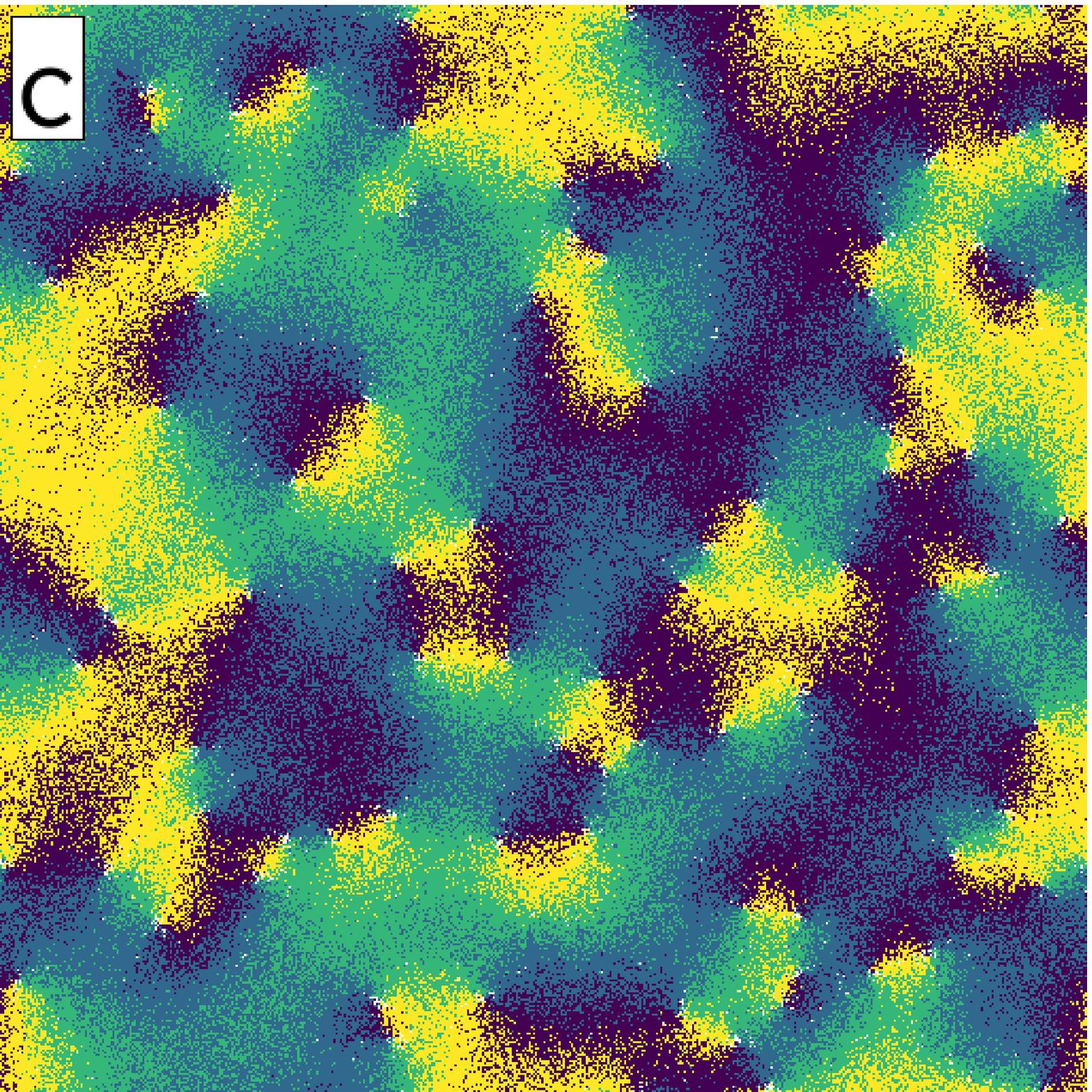}
	\includegraphics[width=4.7cm]{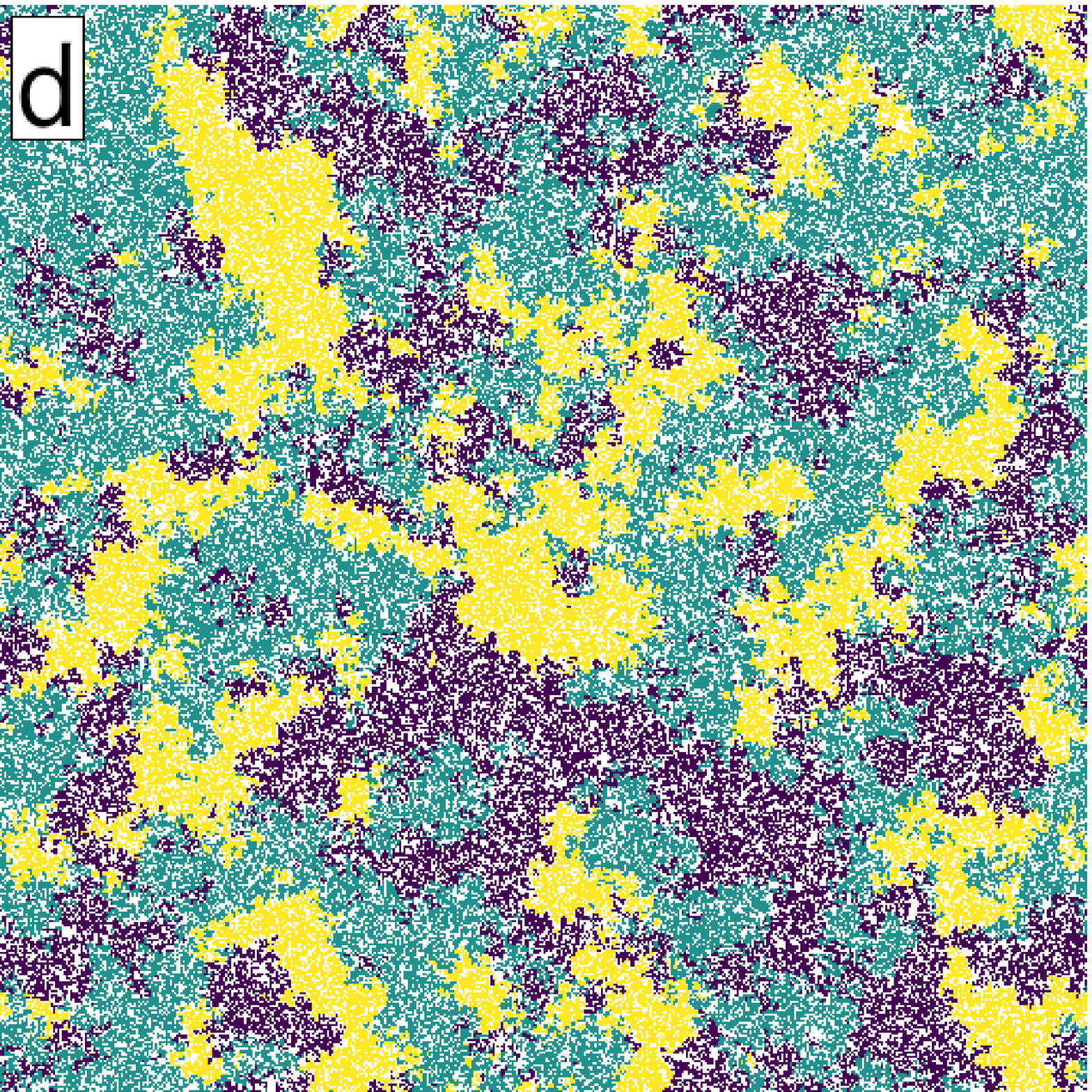}
	\end{center}
	\caption{Samples from Viridicle runs of a: $(3, 1)$ May-Leonard model 
\cite{nr}, b: $(6, 3)$ May-Leonard model \cite{nr}, c: $Z_N$ Lotka-Volterra 
competition \cite{strings}, and d: Durret \& Levin \emph{E. Coli} model 
\cite{ecoli}
		\label{fig:examples}
	}
\end{figure}

\begin{figure}
	\begin{center}
	\includegraphics[width=4.7cm]{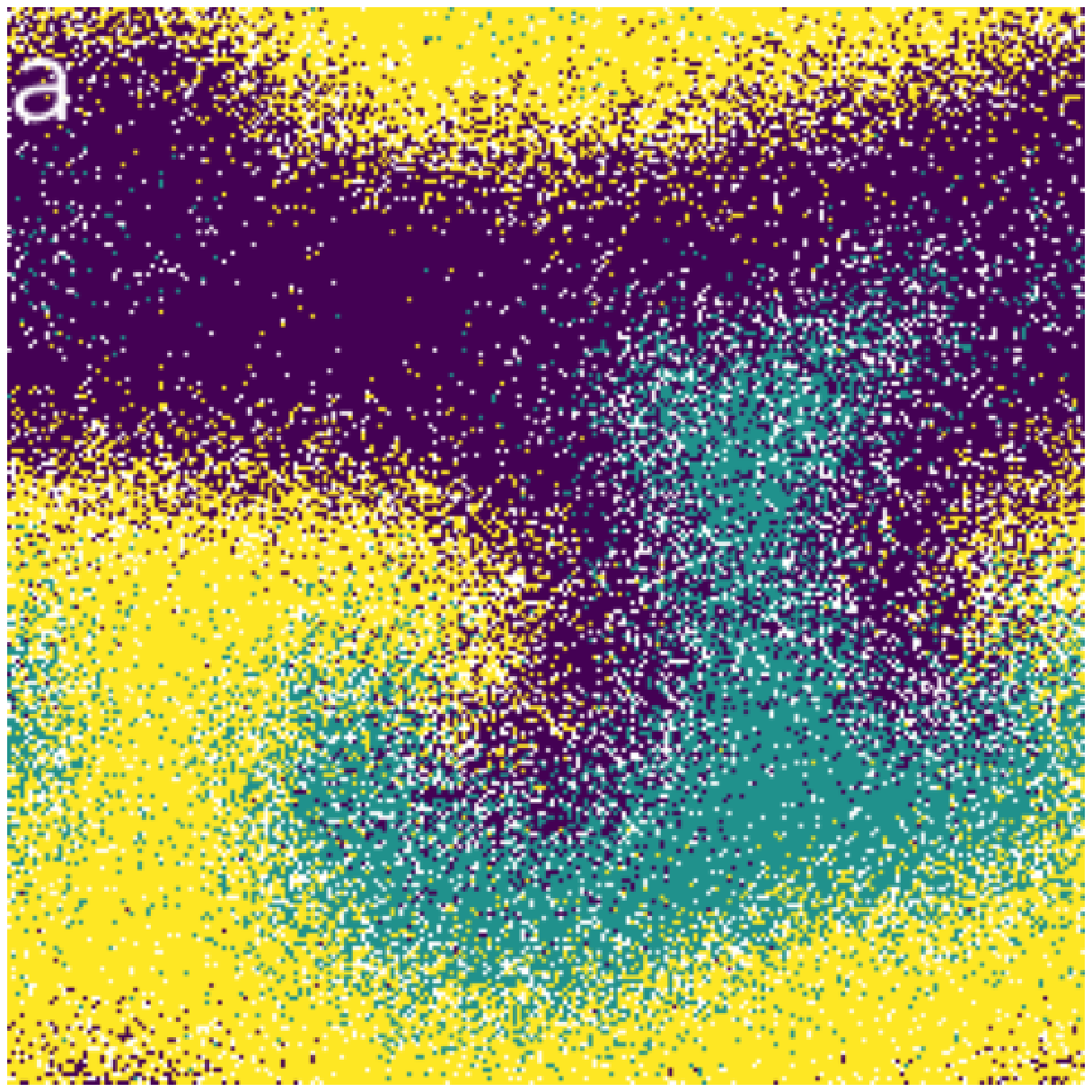}
	\includegraphics[width=4.7cm]{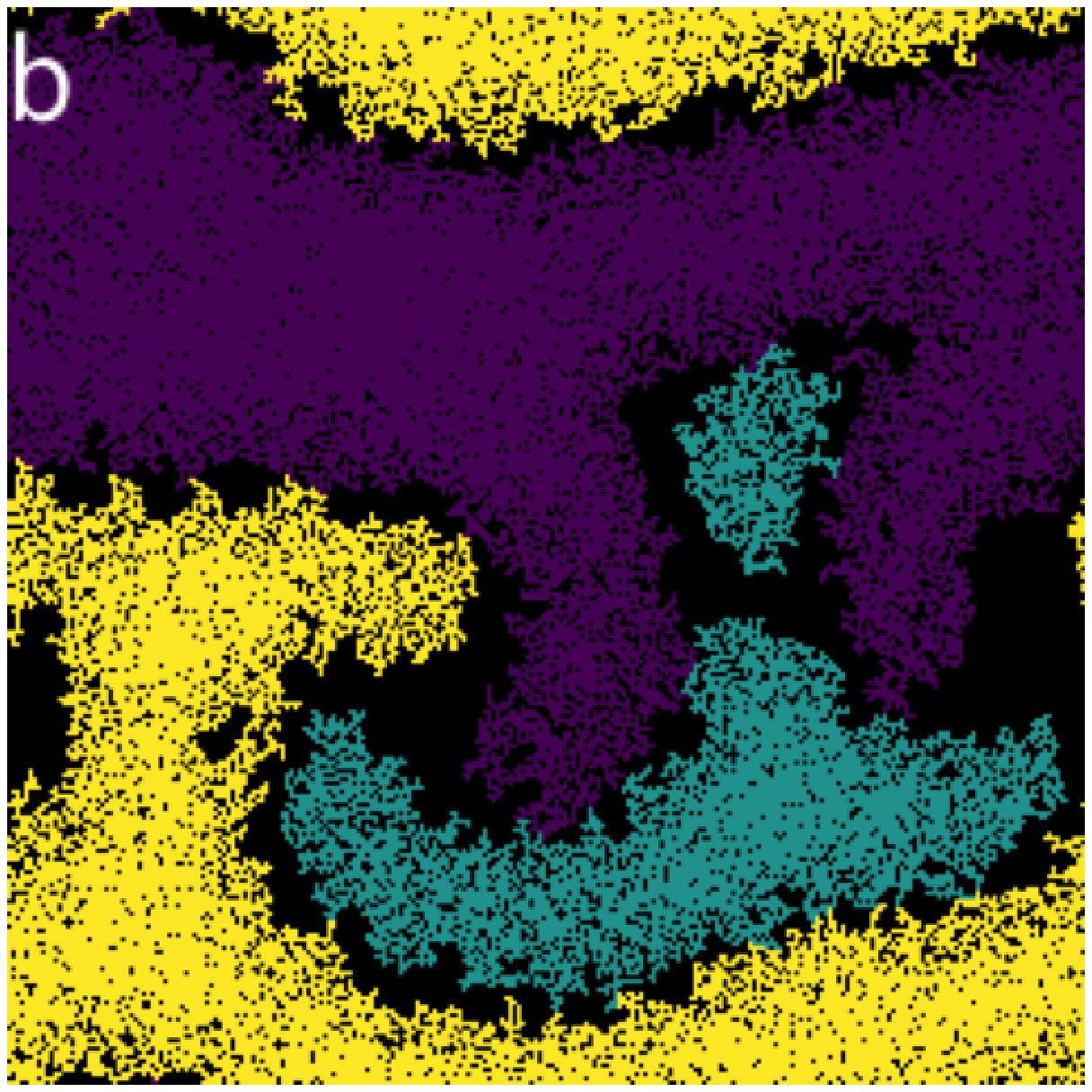}
	\includegraphics[width=4.7cm]{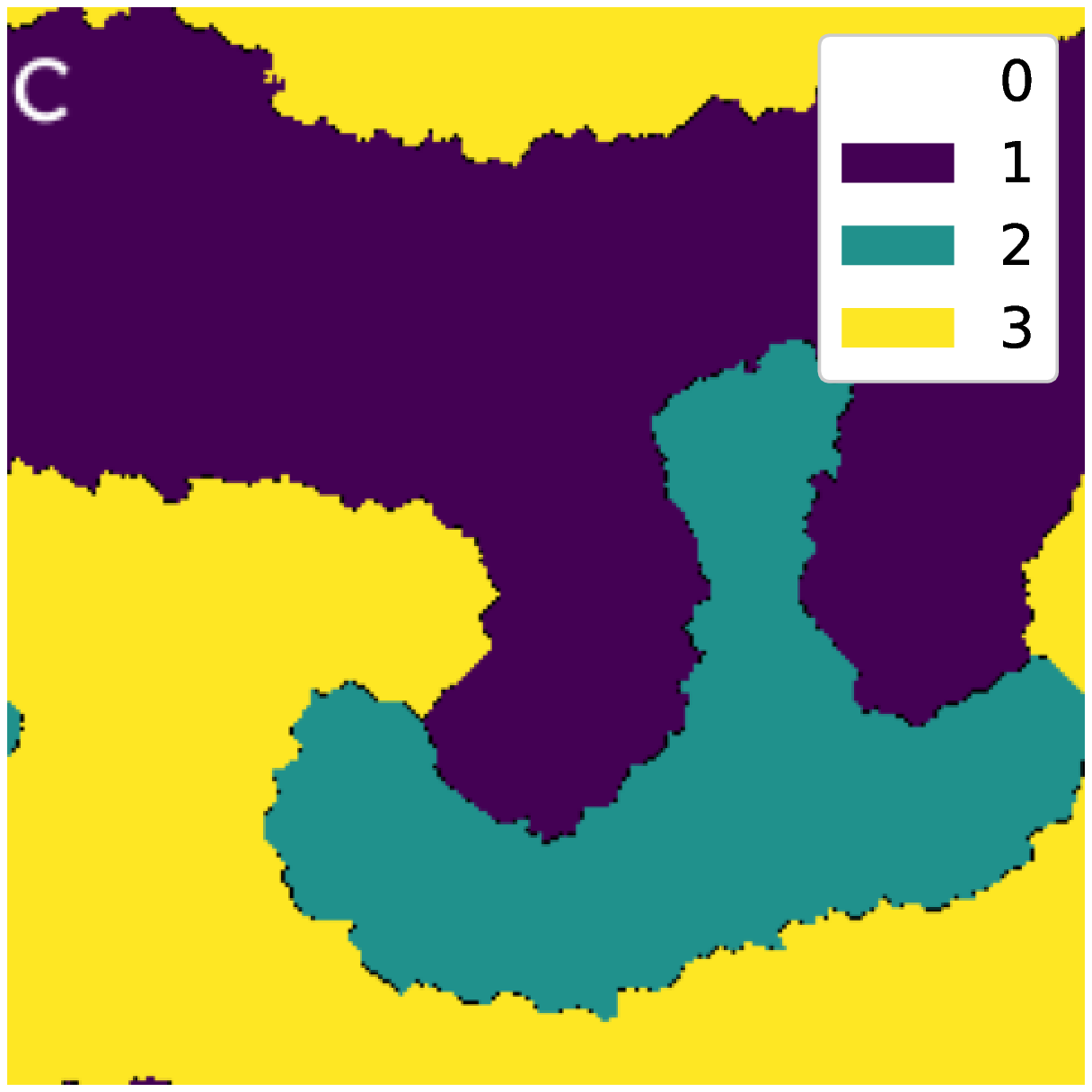}
	\end{center}
	\caption{a: Model prior to cleaning; b: Clusters below 128 size are set to 255;
 c: Clusters are expanded to fill remaining gaps
		\label{fig:cleaning}
	}
\end{figure}

\begin{figure}
	\begin{center}
	\includegraphics[width=4.7cm]{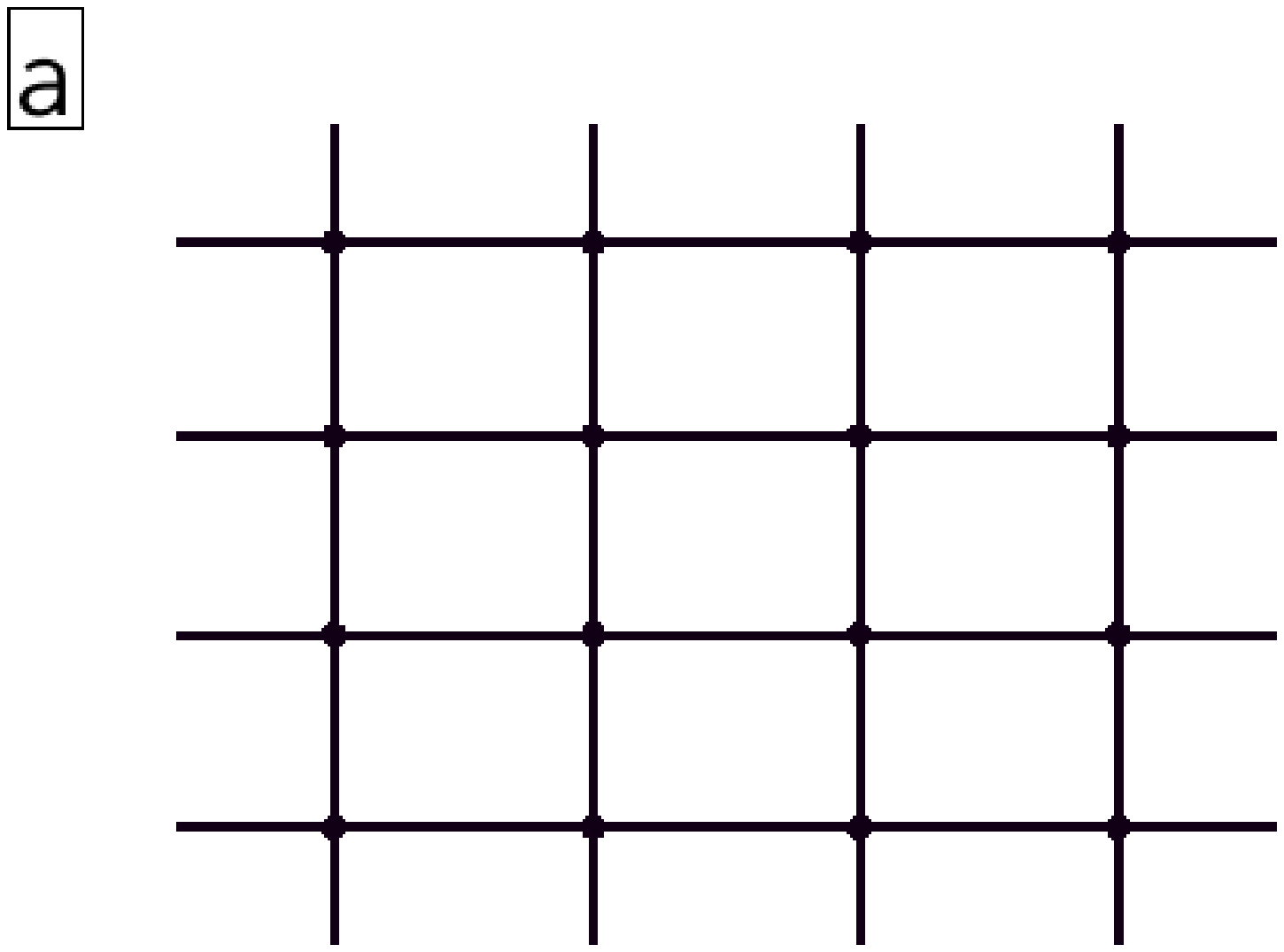}\hfill
	\includegraphics[width=4.7cm]{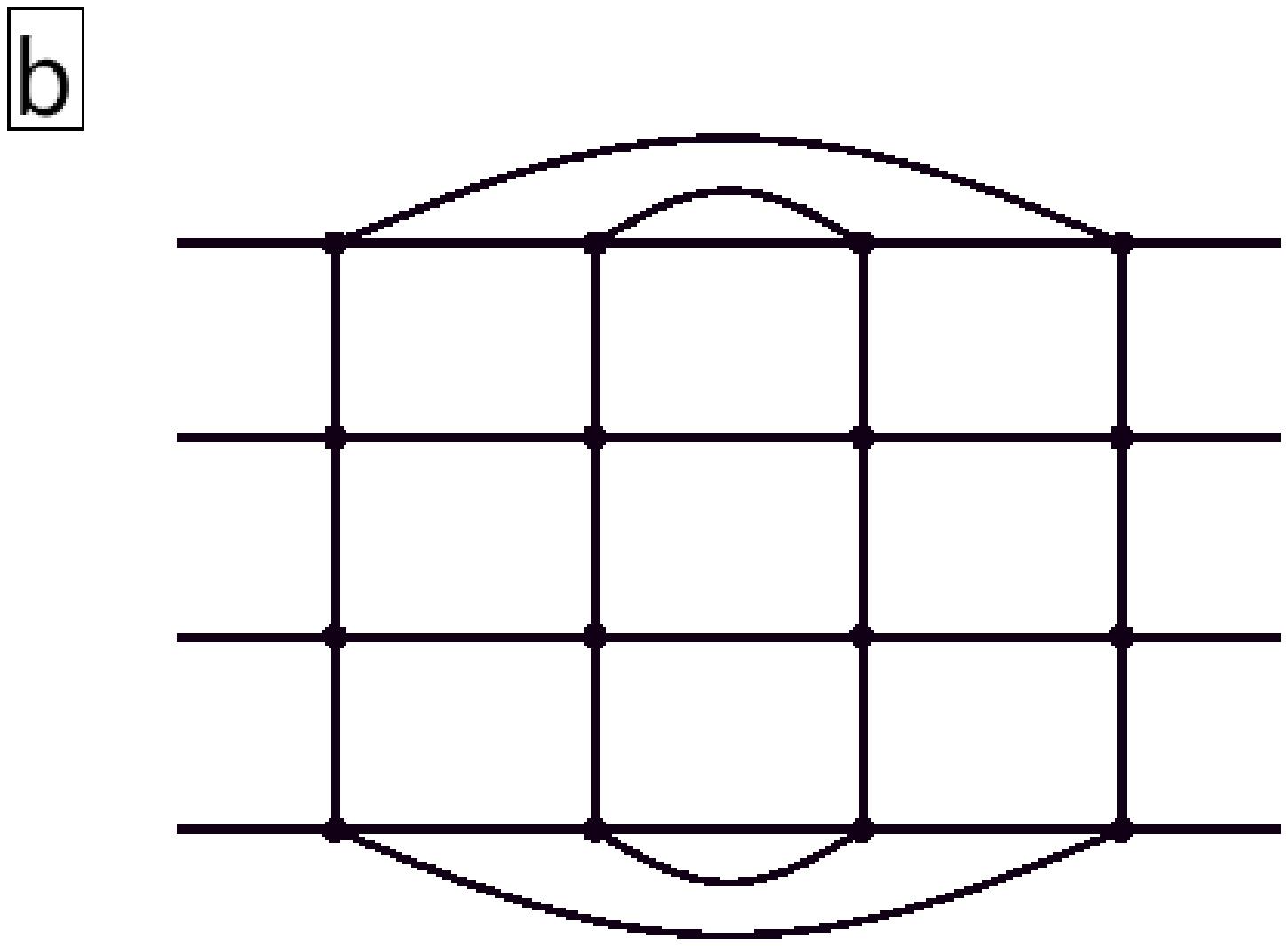}\hfill
	\includegraphics[width=4.7cm]{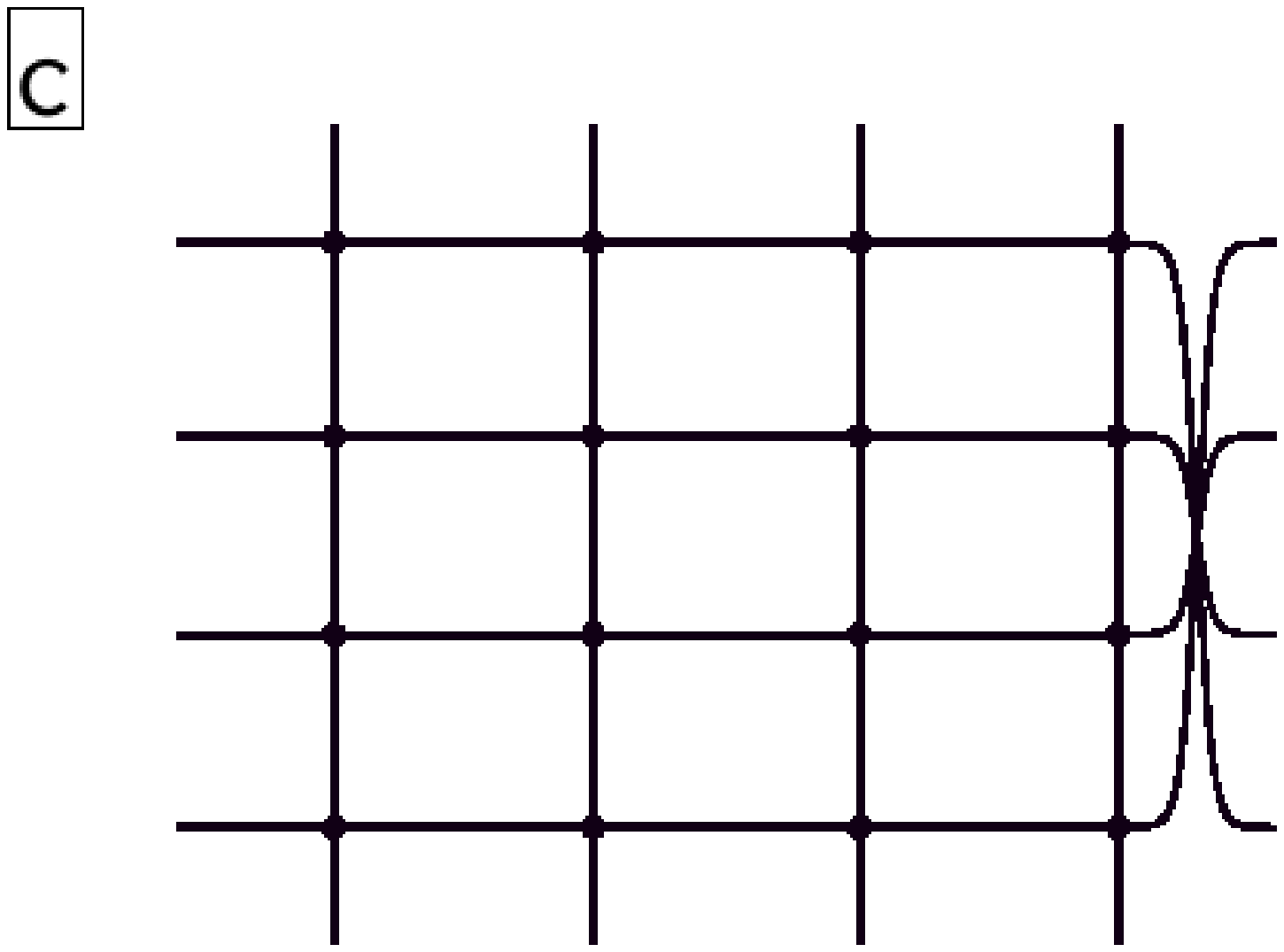}
	\end{center}
	\caption{a: $4\times4$ periodic lattice, b: sphere, and c: Klein bottle. The 
edges of each image are periodic.
		\label{fig:topology}
	}
\end{figure}

\begin{figure}
	\begin{center}
	\includegraphics[width=4.7cm]{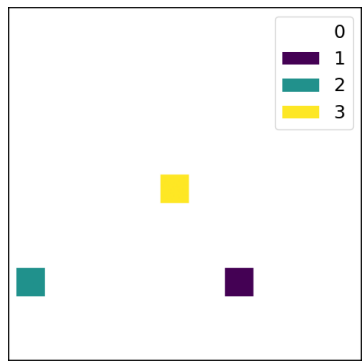}
	\end{center}
	\caption{Example of block initialization - blocks are placed randomly on the 
lattice.
		\label{fig:initializations}
	}
\end{figure}

\begin{figure}
	\begin{center}
	\includegraphics[width=4.7cm]{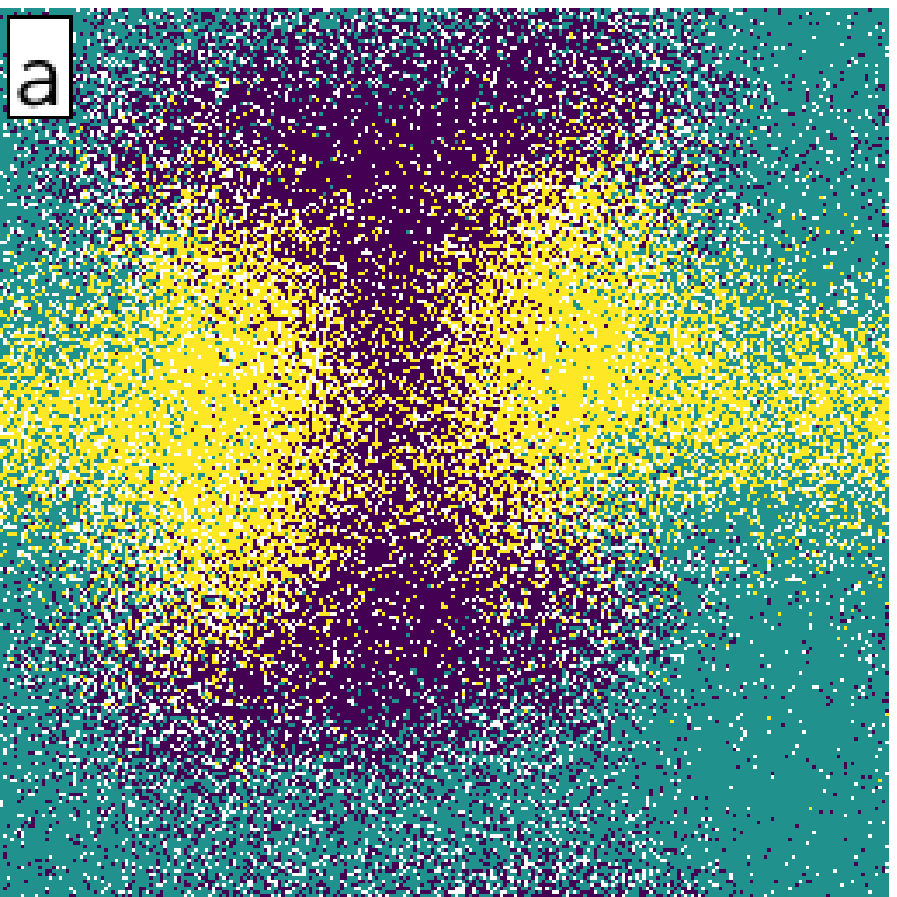}\hfill
	\includegraphics[width=4.7cm]{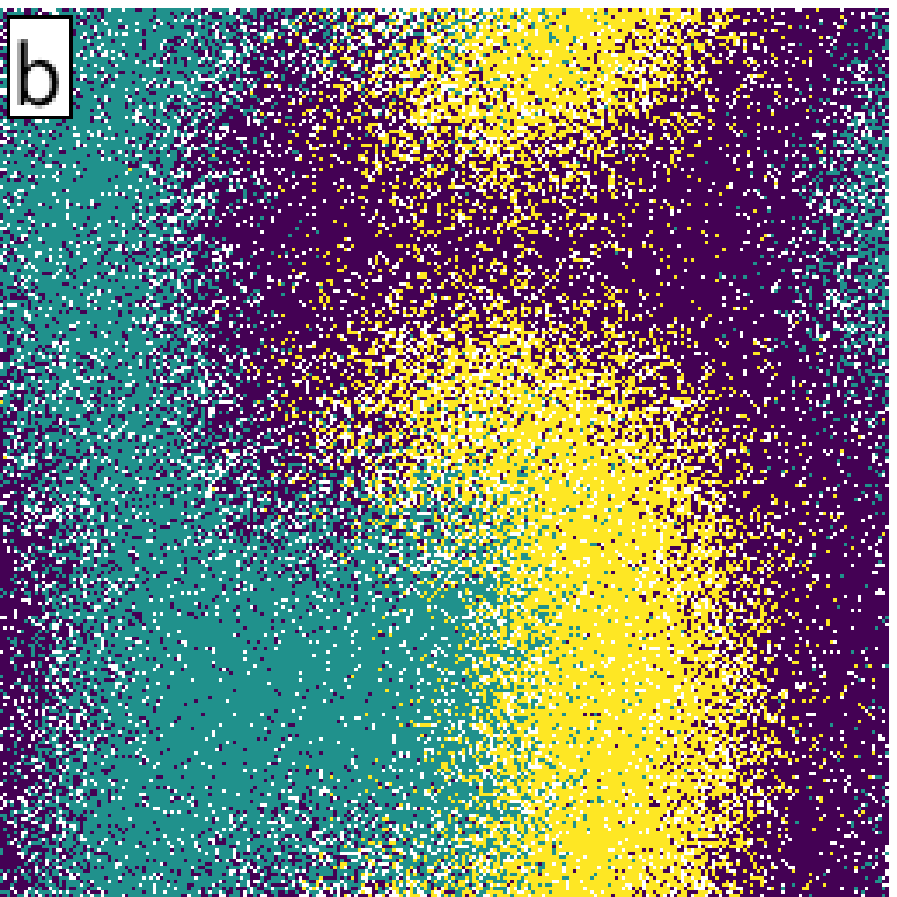}\hfill
	\includegraphics[width=4.7cm]{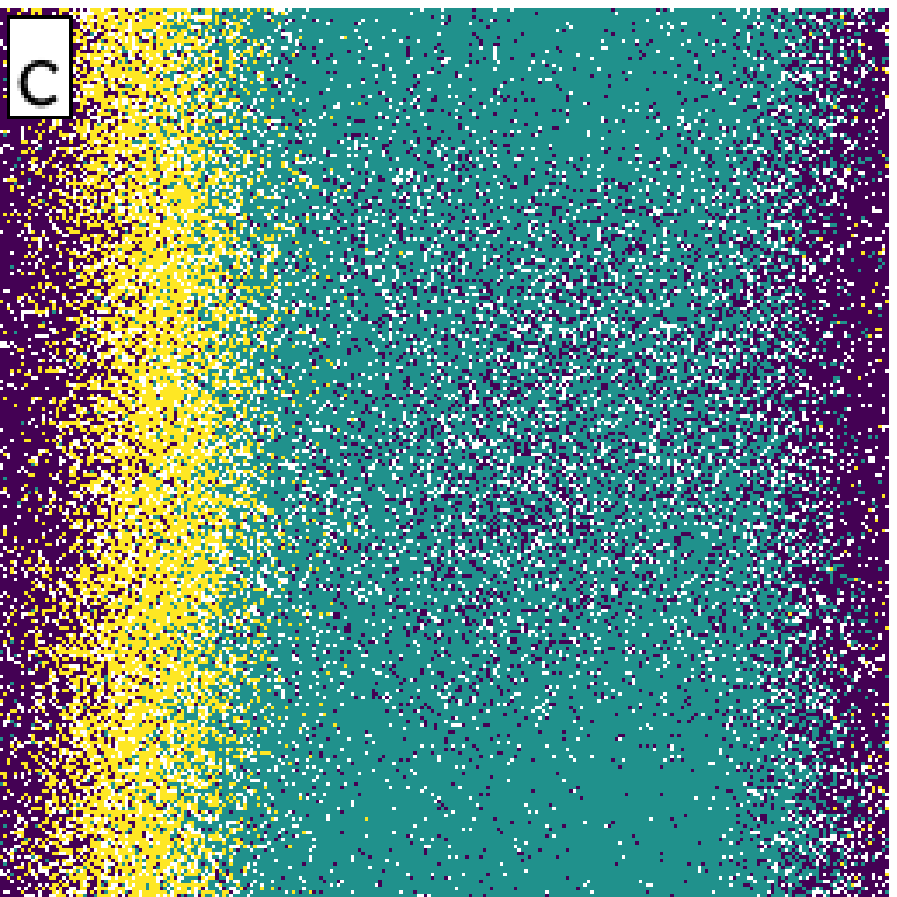}
	\\[\smallskipamount]
	\includegraphics[width=4.7cm]{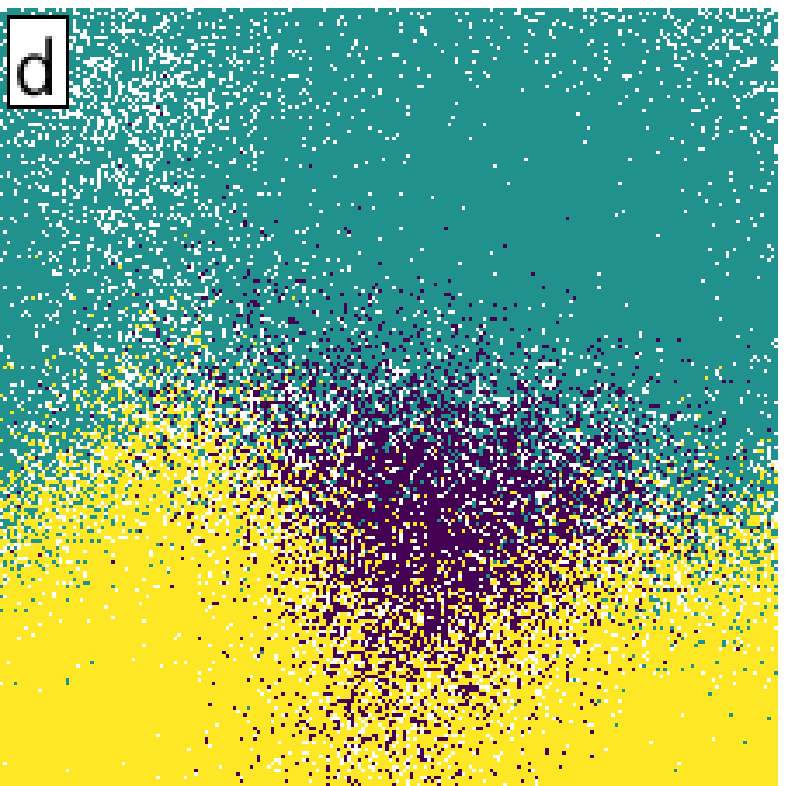}\hfill
	\includegraphics[width=4.7cm]{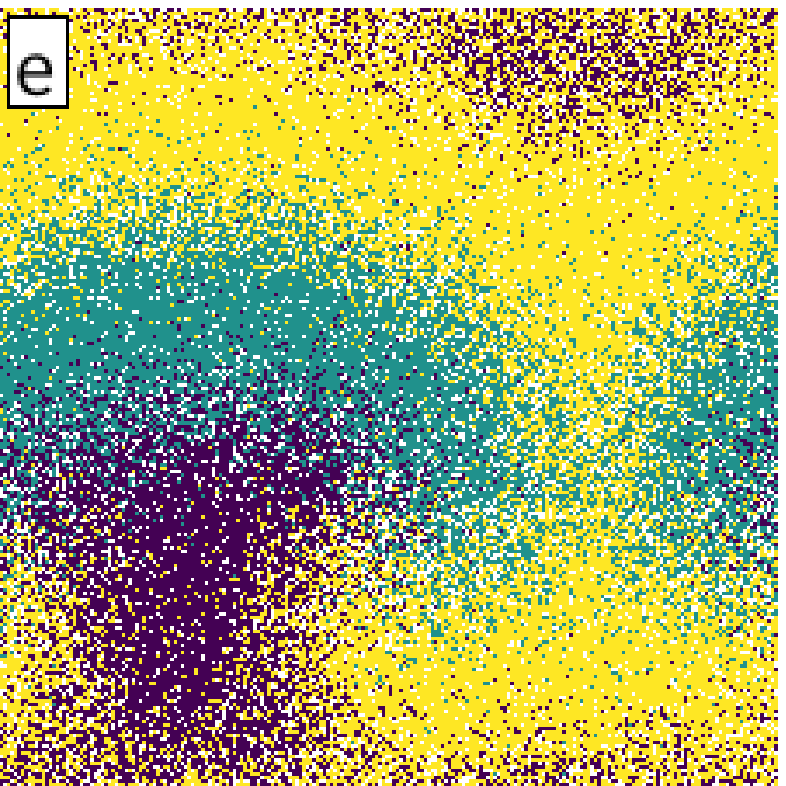}
	\\[\smallskipamount]
	\includegraphics[width=4.7cm]{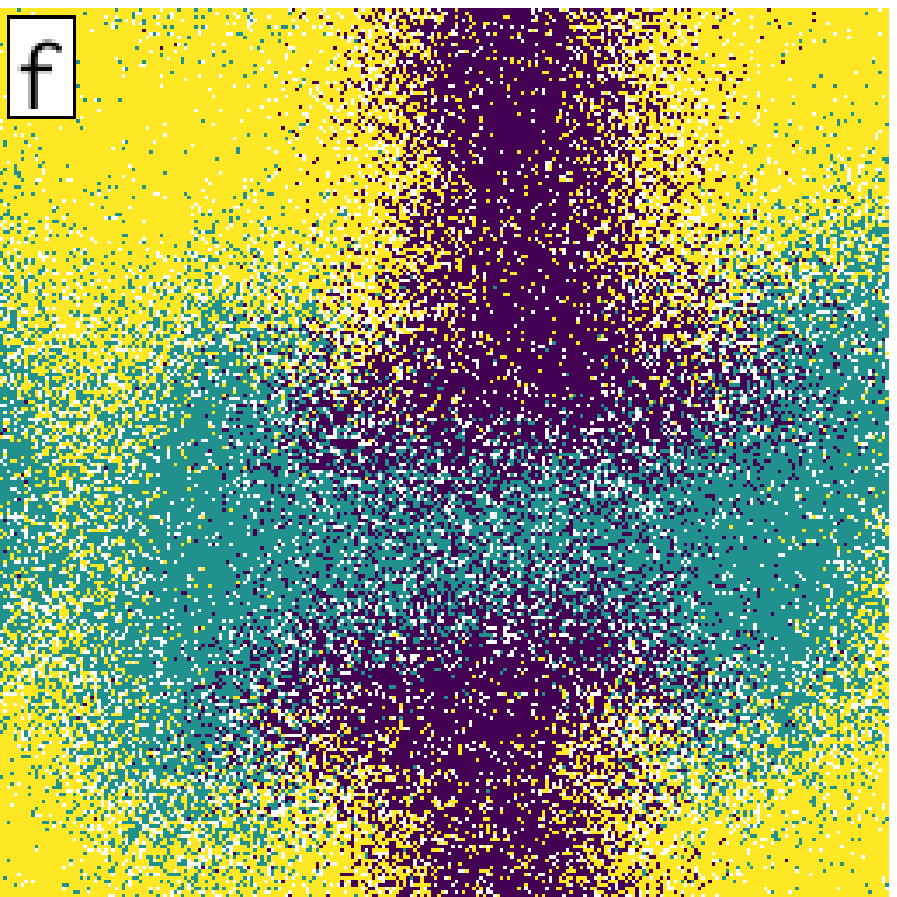}\hfill
	\includegraphics[width=4.7cm]{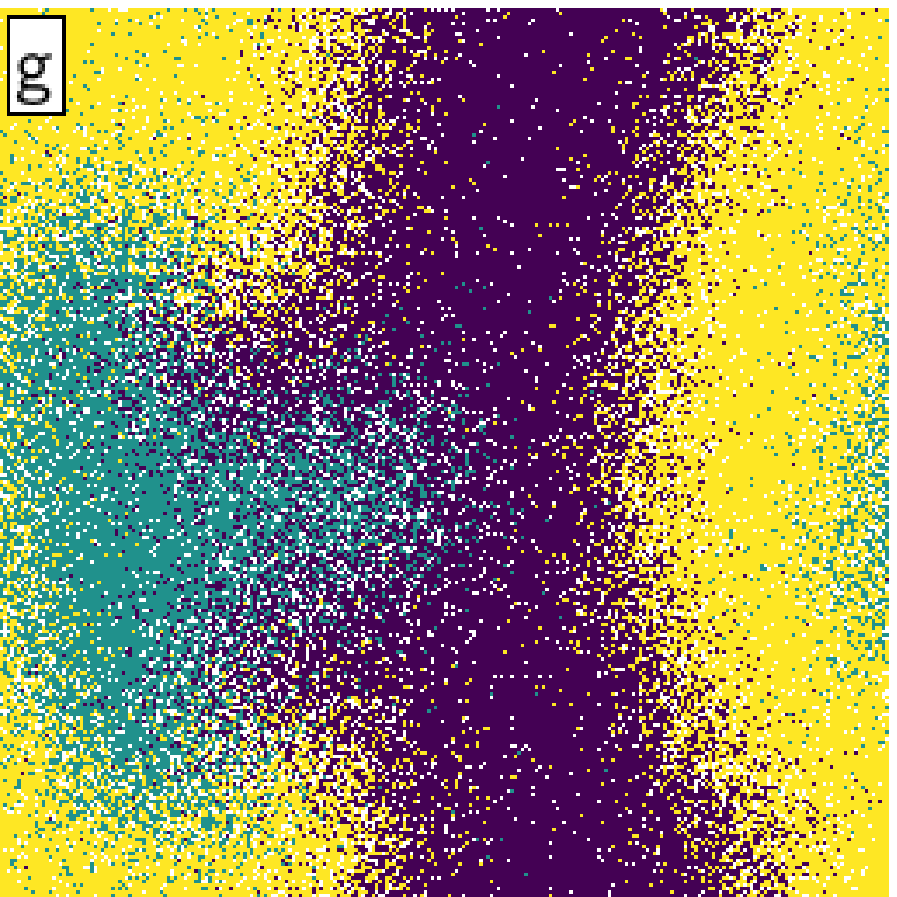}\hfill
	\includegraphics[width=4.7cm]{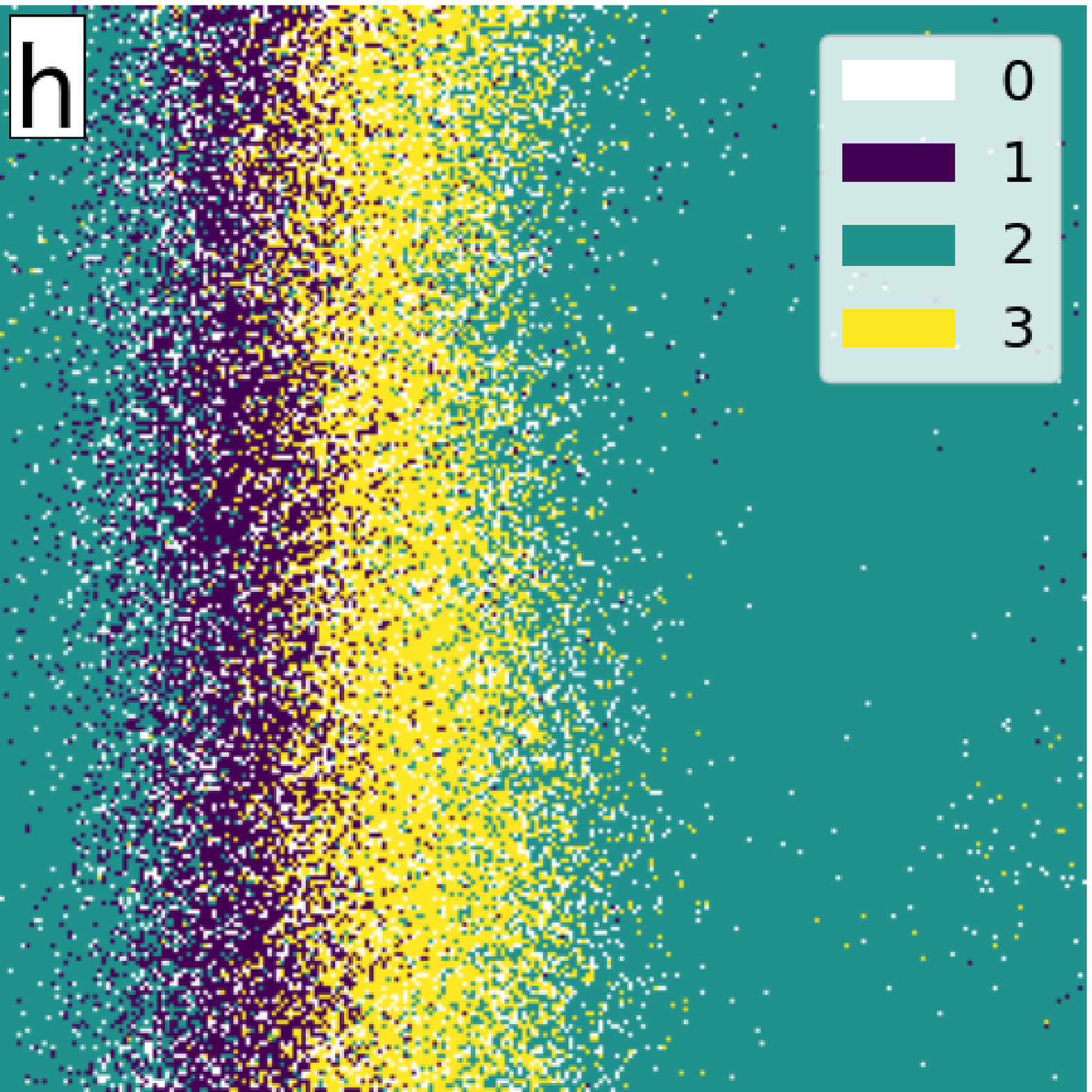}
	\end{center}
	\caption{Top row: lattice. a: Pattern code 111001, b: 111011, c: 111111. Middle
 row: sphere. d: 111000, e: 121001. Bottom row: torus. f: 111001, g: 111011, h:
 111111. 111111 are the ``marching bands.''
		\label{fig:patterns}
	}
\end{figure}

\begin{figure}
	\begin{center}
	\includegraphics[width=4.7cm]{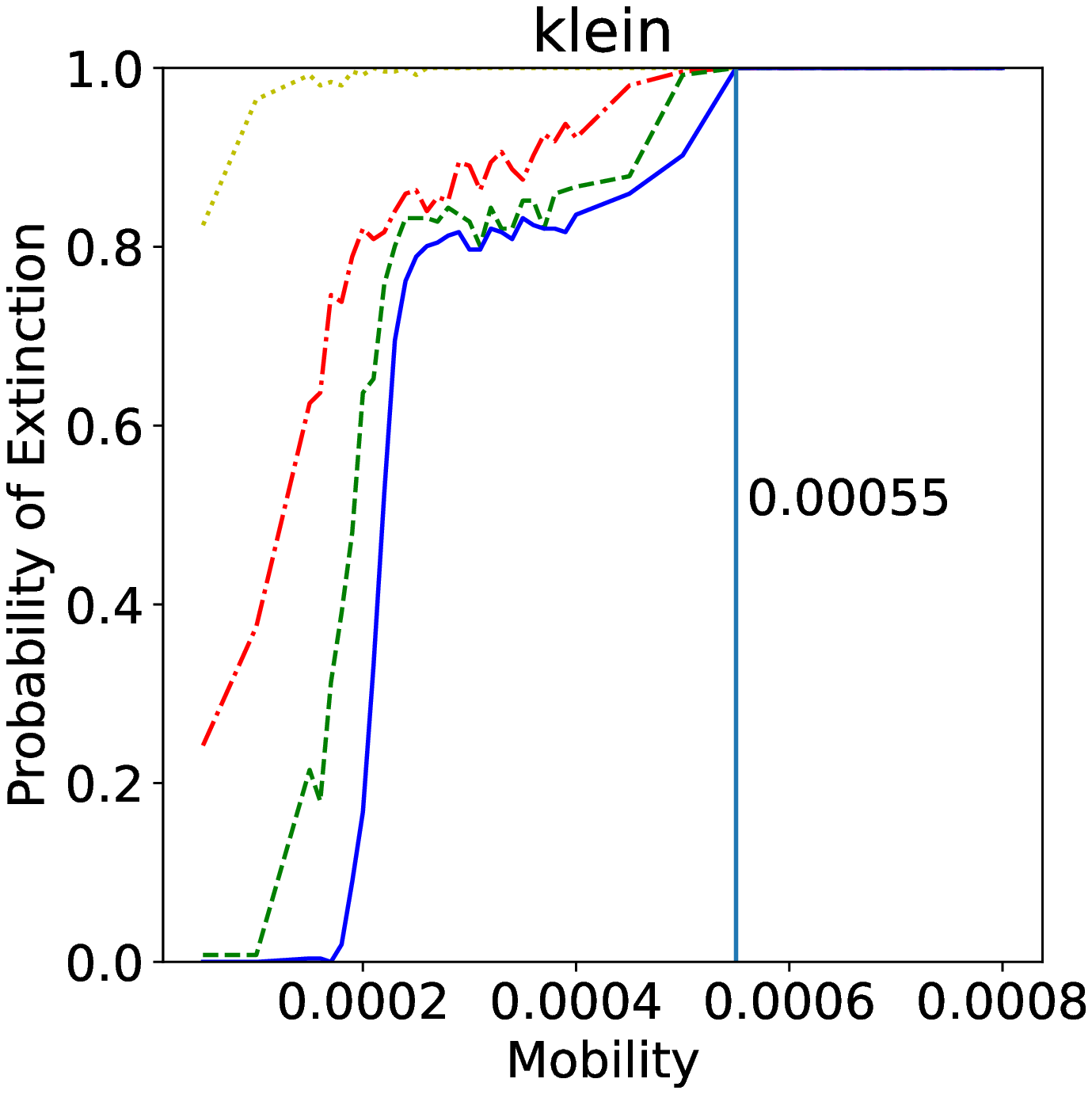}\hfill
	\includegraphics[width=4.7cm]{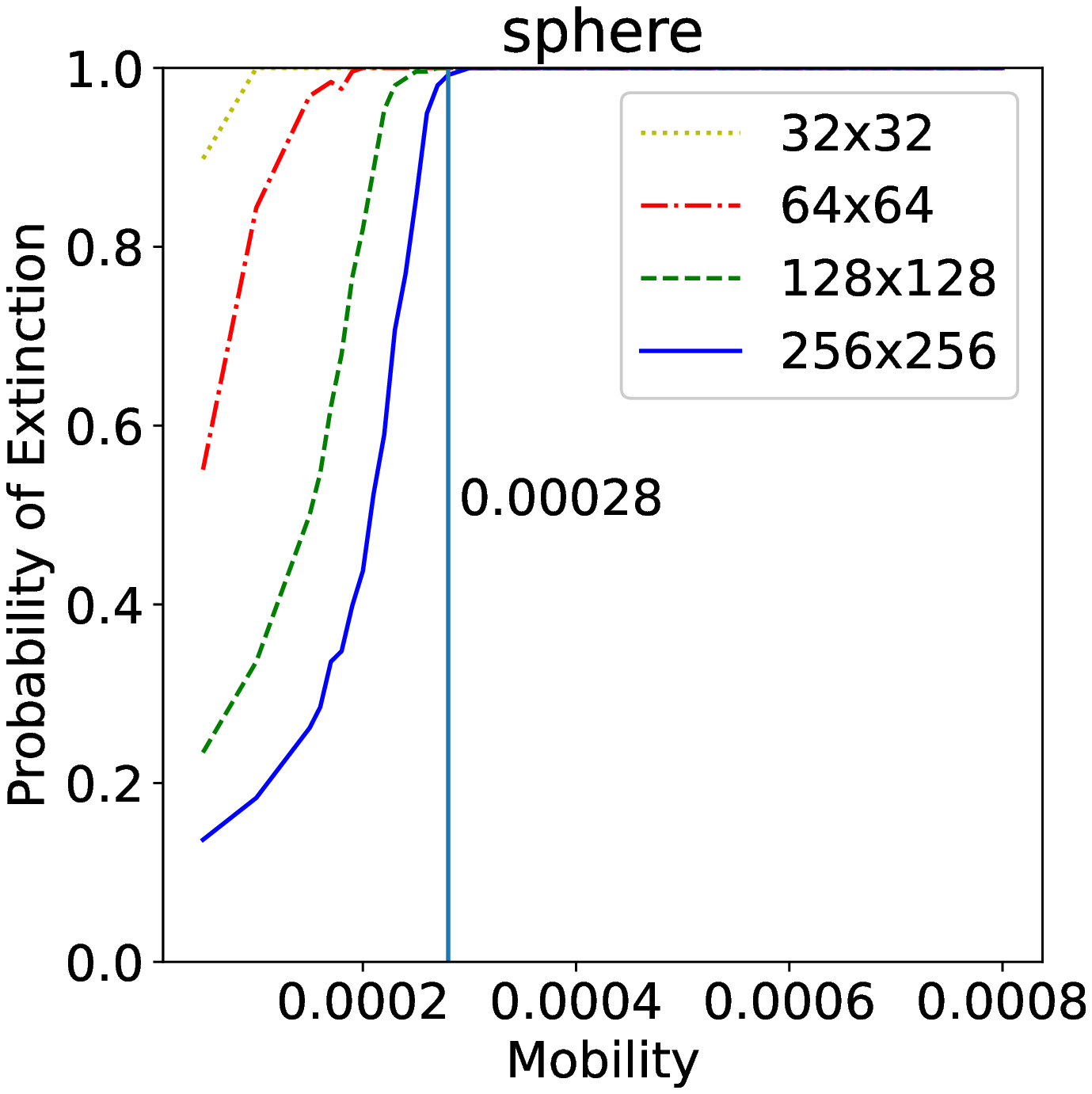}\hfill
	\includegraphics[width=4.7cm]{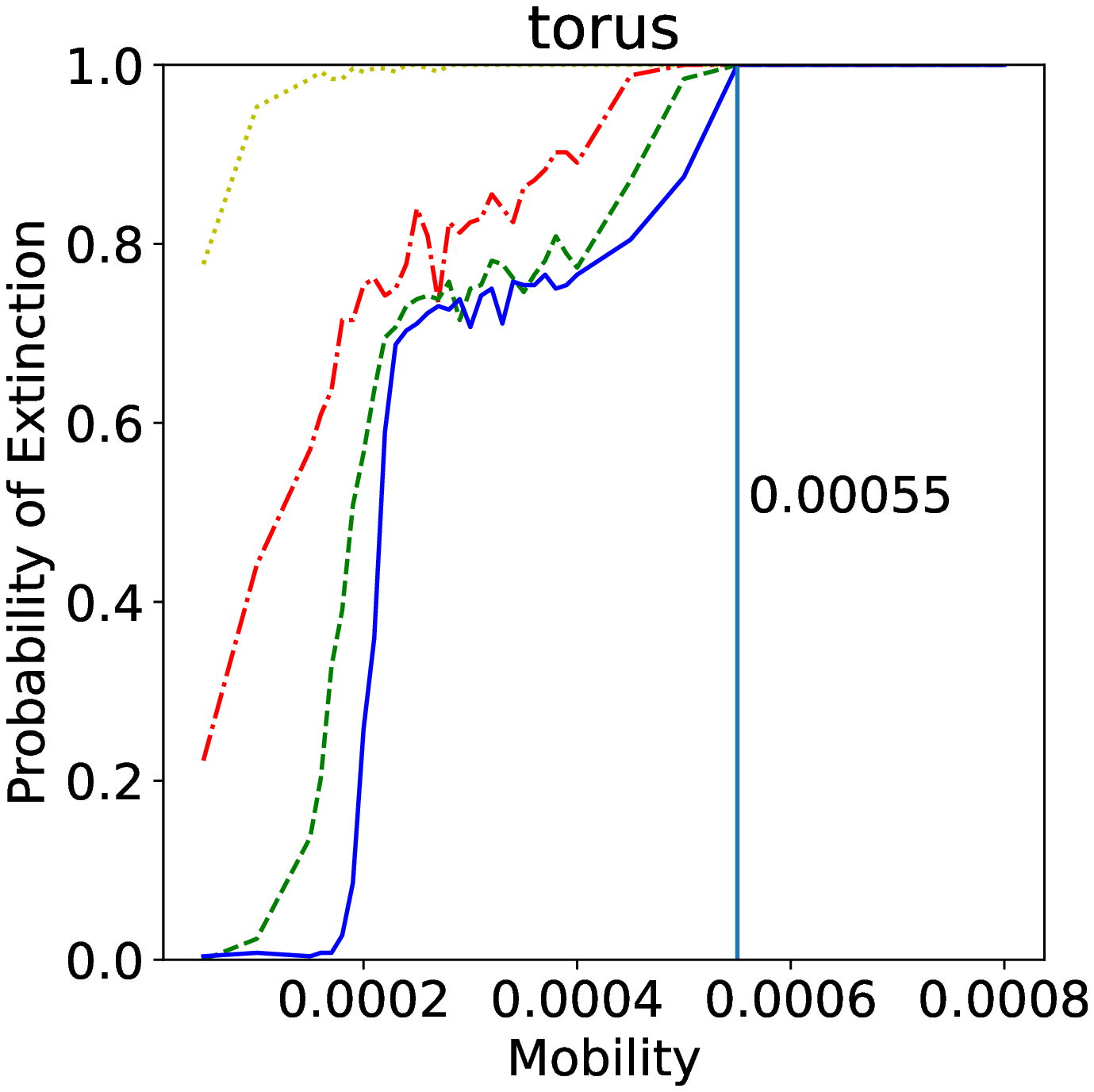}
	\end{center}
	\caption{Probability of extinction by 1000 time versus mobility, from 
experiments with block initialization. Each curve represents a different 
lattice size. Vertical lines are drawn at the mobility threshold where $99\%$ 
of all experiments see at least one species go extinct by 1000 time.
		\label{fig:extinction}
	}
\end{figure}

\begin{figure}
	\begin{center}
	\includegraphics[width=4.7cm]{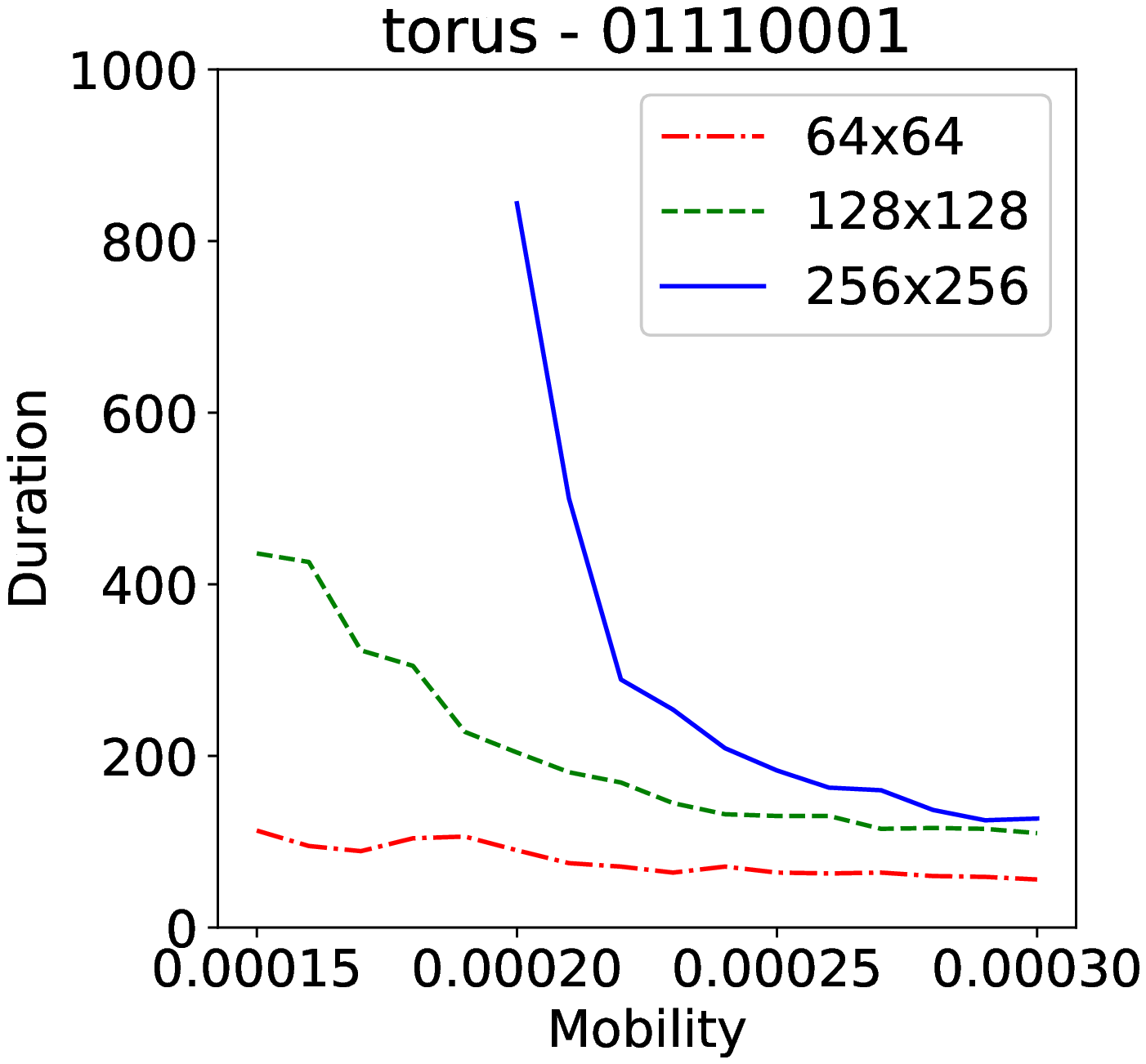}\hfill
	\includegraphics[width=4.7cm]{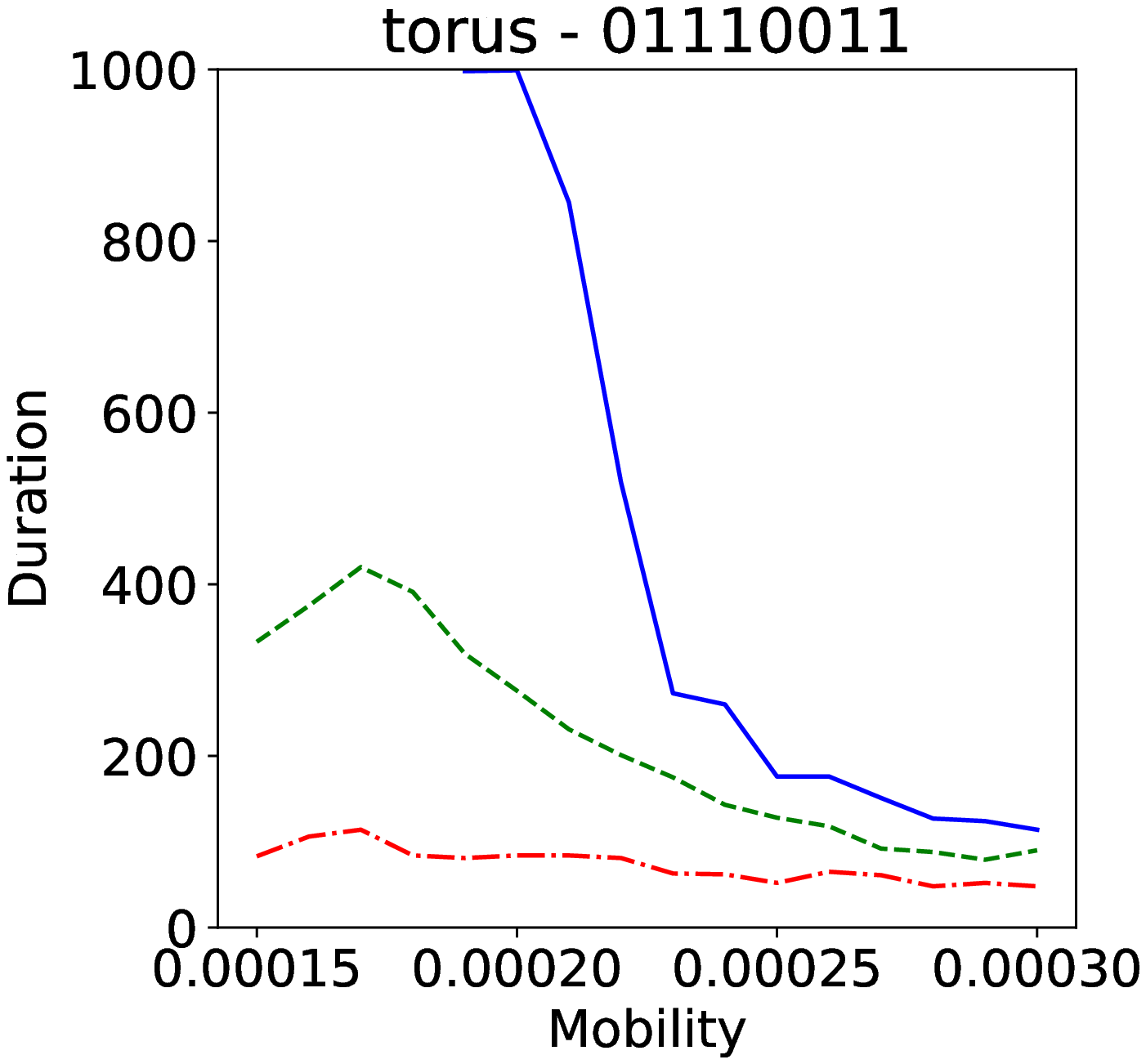}\hfill
	\includegraphics[width=4.7cm]{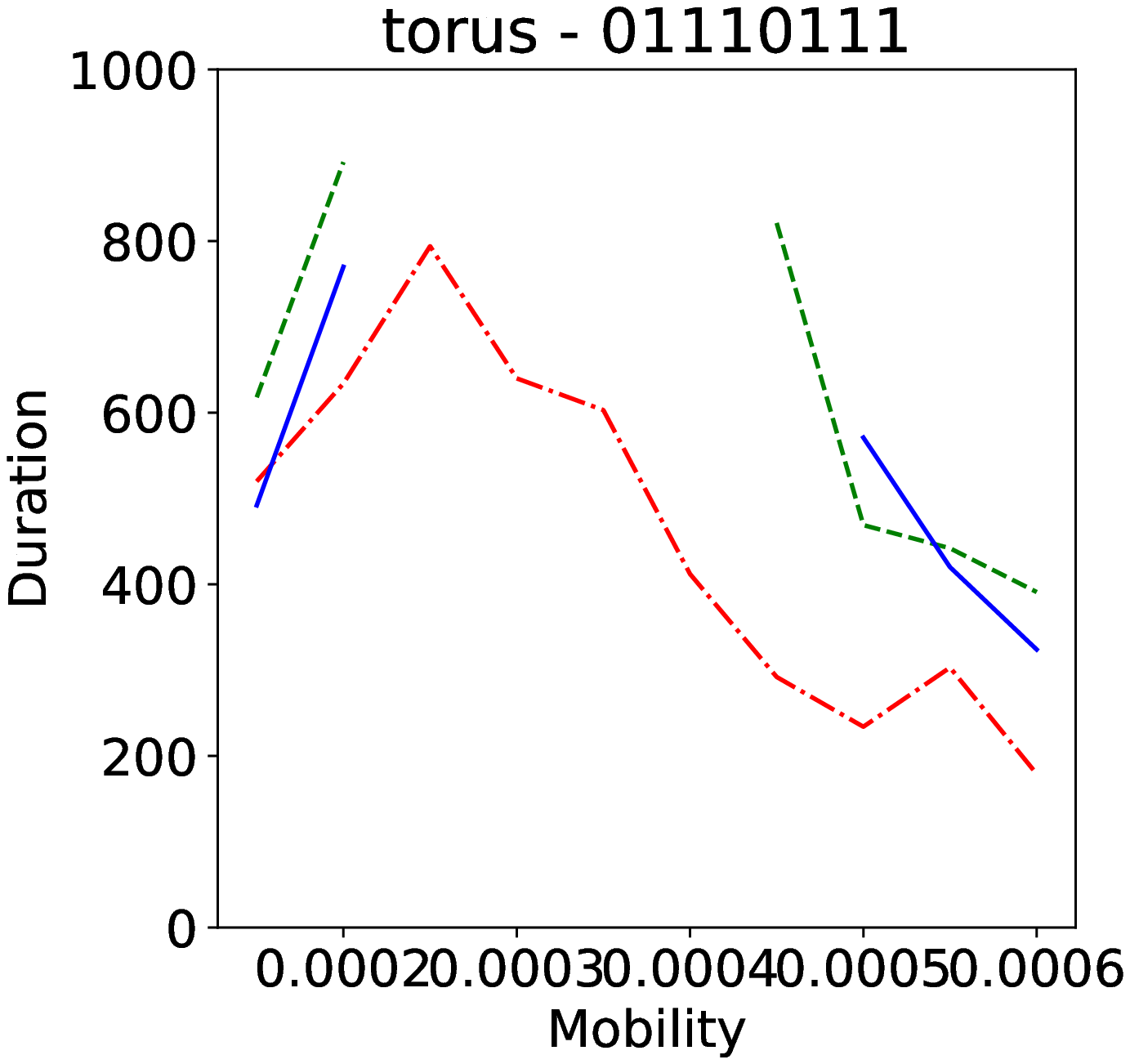}
	\\[\smallskipamount]
	\includegraphics[width=4.7cm]{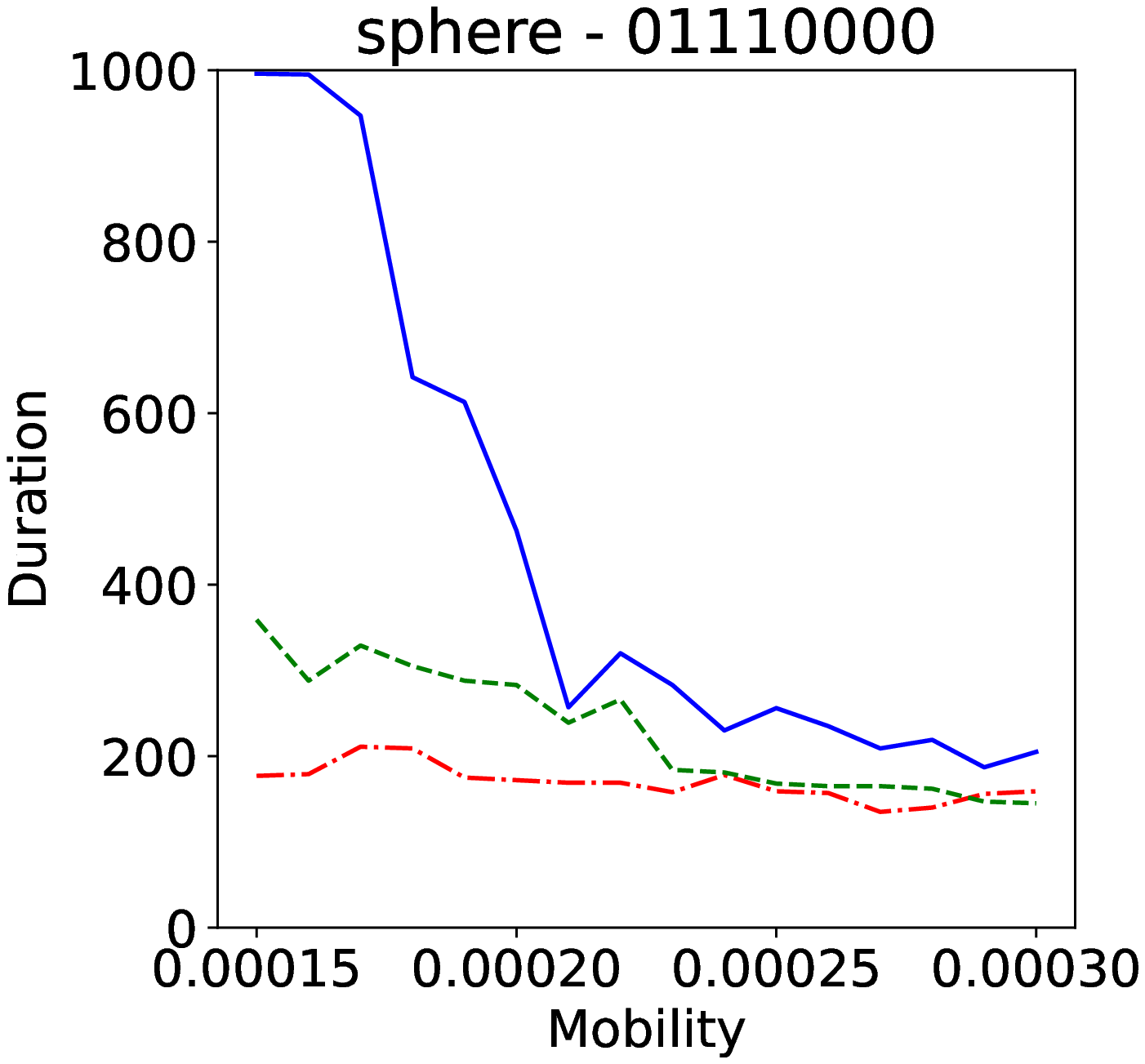}\hfill
	\includegraphics[width=4.7cm]{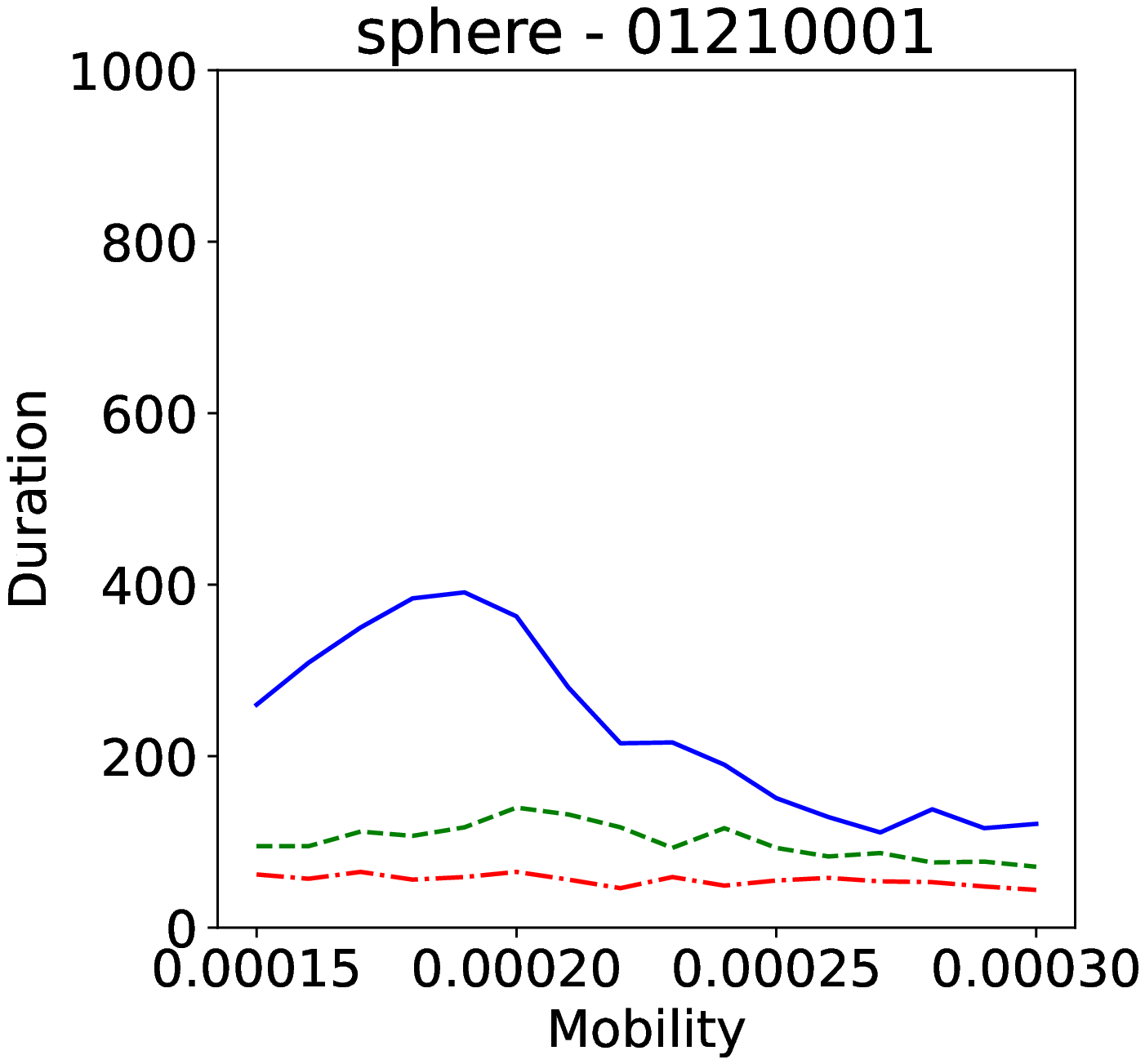}
	\\[\smallskipamount]
	\includegraphics[width=4.7cm]{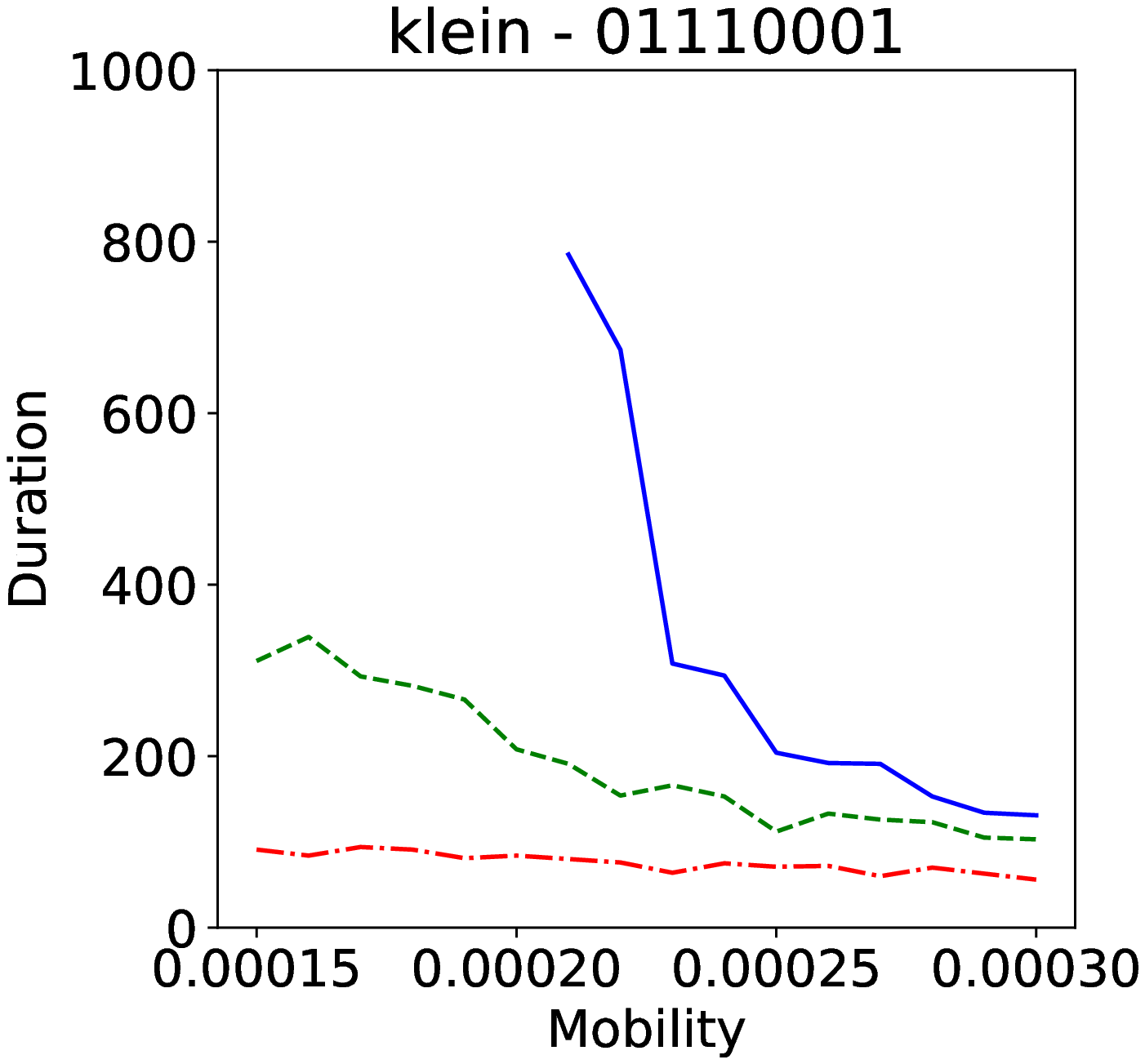}\hfill
	\includegraphics[width=4.7cm]{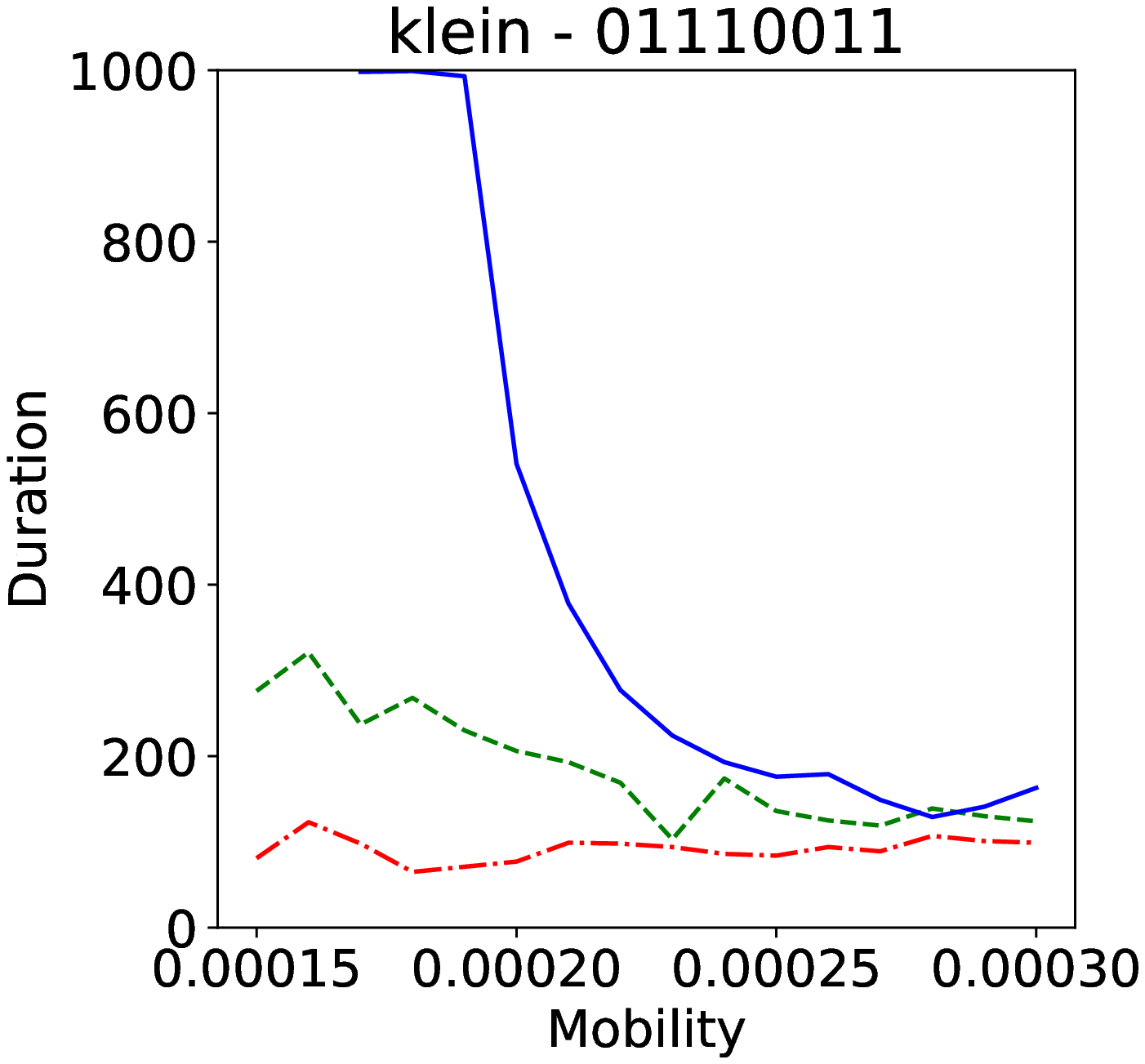}\hfill
	\includegraphics[width=4.7cm]{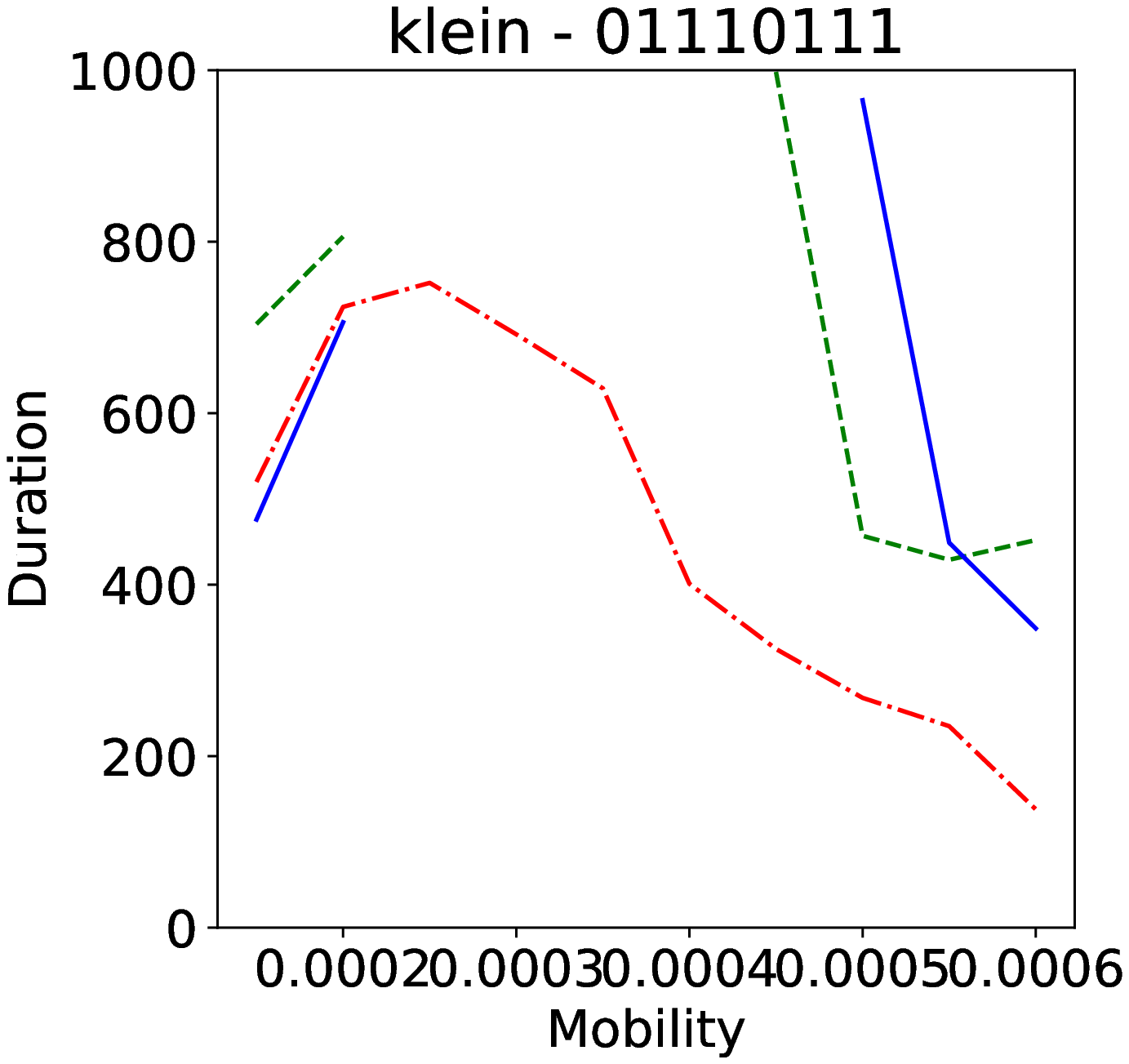}
	\end{center}
	\caption{Median survival times vs mobility for each pattern from experiments 
with patterned initializations. Note range of horizontal axis is not 
consistent. Titles show graph type and pattern code. We omit portions of the 
curve where the median pattern cannot be estimated because fewer than 50\% of 
the experiments had collapsed by 1000 time.
		\label{fig:median-durations}
	}
\end{figure}

\begin{figure}
	\begin{center}
	\includegraphics[width=4.7cm]{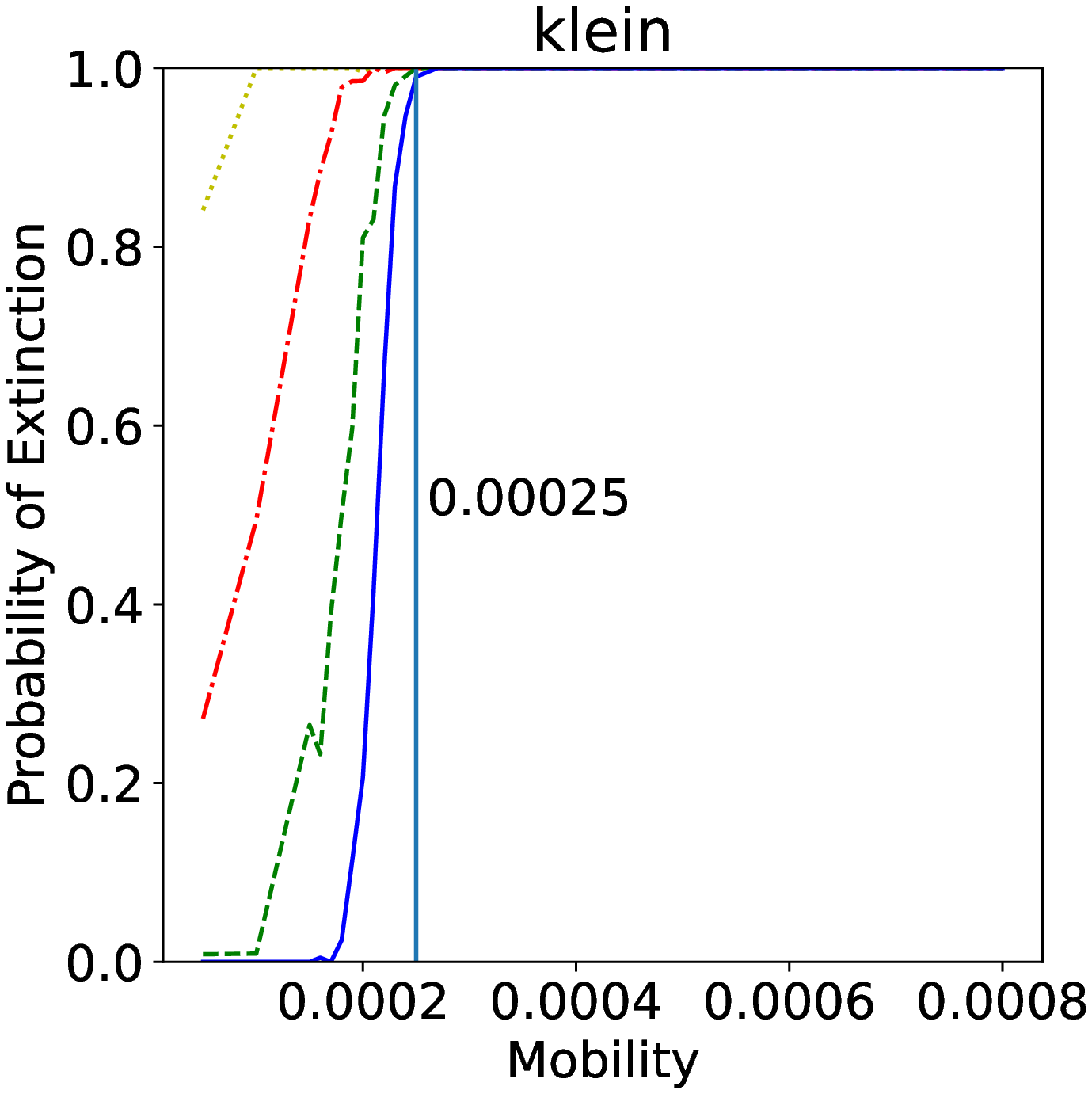}
	\includegraphics[width=4.7cm]{images/extinction_sphere.eps}
	\includegraphics[width=4.7cm]{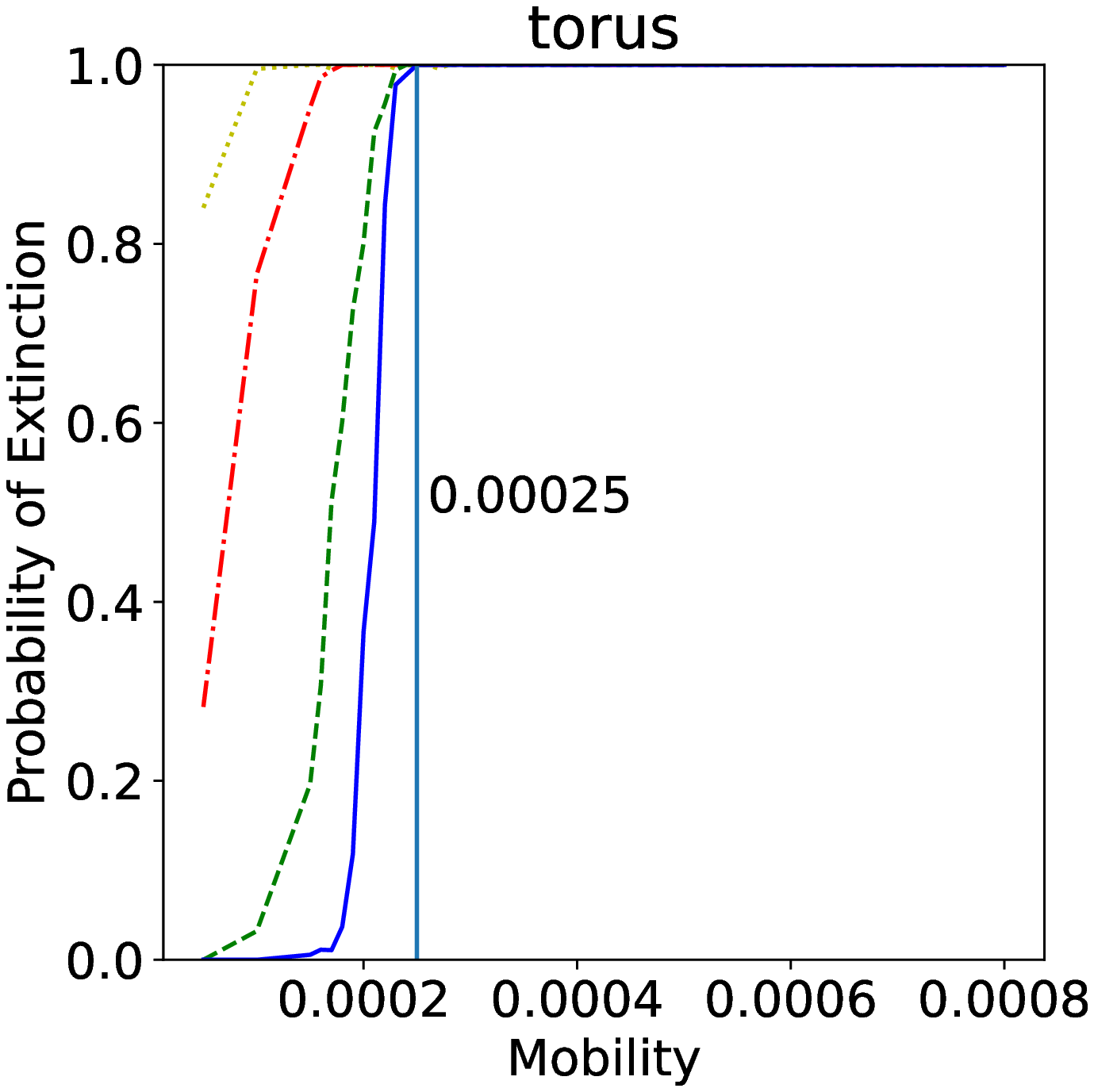}
	\end{center}
	\caption{Probability of extinction by 1000 time versus mobility, from 
experiments with block initialization, excluding experiments that entered the 
marching bands pattern. Each curve represents a different lattice size. 
Vertical lines are drawn at the mobility threshold where $99\%$ of all 
experiments see at least one species go extinct by 1000 time.
		\label{fig:extinction-without-marching-bands}
	}
\end{figure}

\end{document}